\newcommand{\R}{\mathbb{R}}
\newcommand{{\dd}}{\,\mathrm{d}}
\newcommand{\opt}{\mathrm{opt}}
\newcommand{\mes}{\mathrm{mes}}
\newcommand{\sing}{\mathrm{sing}}
\renewcommand{\div}{\mathrm{div}}
\DeclareMathOperator*{\argmax}{arg\,max}
\newtheorem{theorem}{Theorem}
\newtheorem*{theorem*}{Theorem}
\newtheorem{lemma}{Lemma}
\newtheorem*{lemma*}{Lemma}
\newtheorem{proposition}{Proposition}
\newtheorem{remark}{Remark}
\newtheorem{example}{Example}
\title{
Auctions and mass transportation}
\author{ Alexander V. Kolesnikov}
\begin{document}

\maketitle

\begin{abstract}
In this survey paper we present classical and recent results relating the auction design  and the optimal transportation theory. In particular, we discuss in details the seminal result of Daskalakis, Deckelbaum and Tzamos \cite{DDT} about duality between auction design with $1$ bidder and the weak transportation problem. Later investigations revealed the connection of multi-bidder case to the Beckmann's transportation problem. In this paper we overview a number of works on related subjects (monopolist's problem, regularity issues, weak transportation, measure ordering etc.).
In addition, we prove some new results on duality for unreduced mechanisms. 
\end{abstract}

\section{Introduction}

The  optimal transportation theory attracts nowadays substantial attention of researchers in economical science. Among  other applications let us mention the problems of matching, equilibrium,
mechanism design, multidimensional screening, urban planning  and financial mathematics.
Detailed expositions are given in books Galichon \cite{Gal}, Santambrogio \cite{Santambrogio}, but now they are far from being complete, because the number of related articles is rapidly growing every year.

The auction theory is a modern branch in economics, several researchers in this field were awarded with  Nobel Prize in the 21st century.
The aim of this expository paper is to present some classical and new result from the auction theory to a mathematically-minded reader, unfamiliar with economical applications. As the reader will realize, a deep understanding of the auction design model requires a solid mathematical background, which include linear programming and duality theory, PDE's, variational calculus and functional analysis. 
The generally recognized mathematical approach to auctions was given in the celebrated paper of Myerson \cite{Myerson} from 1981, where he completely solved the problem for the case of one good.  But still very little is known in the general case and the author hopes that this short survey could be useful for mathematicians  interested in economical applications.

The first part  of this paper presents the description of the Myerson's model and related classical results. In is followed by a short introduction to the optimal transportation theory. Then we start to discuss recent results. A remarkable relation between optimal transportation and auction theory  for the case of 1 bidder was discovered by Daskalakis, Deckelbaum  and Tzamos in their celebrated paper \cite{DDT}. We present the result from \cite{DDT} and consequent developments and extensions  for many bidders  obtained in Kolesnikov--Sandomirskiy--Tsyvinski--Zimin 
 \cite{KSTZ}, then discuss some open problems and the perspectives. The last section contains new results on characterization of mechanisms with multiple bidders and items by duality methods and relation between  the general auctioneer's problem and the  optimal transshipment problem. 

The author thanks Ayrat Rezbaev and Konstantin Afonin for their help in preparation of this manuscript and Fedor Sandomirskiy for comments and interest.
Author  acknowledges the support of  RSF Grant \textnumero  22-21-00566  https://rscf.ru/project/22-21-00566/. The article was prepared within the framework of the HSE University Basic Research Program.

\section{Auctions}

\subsection{The model: $1$ bidder}

The standard Bayesian auction design deals with the following model:
we consider a set of  $m$ bidders and $n$ items of goods, $n, m\in \mathbb{N}$. Items are supposed to be divisible and normalized in such a way that the total amount of every item is exactly one unit.

 We attribute to every bidder its "private information" vector $x_j \in X = [0,1]^n$, which specifies the willingness of the bidder $j$ to pay for each item. Thus, the utility for bidder $j$ receiving a bundle $p_j \in \mathbb{R}^n_{+}$ of items for the price $t_j \in \mathbb{R}$ can be computed as follows:
 $$
\langle p_j, x_j \rangle - t_j.
 $$

Let us start with the case of $m=1$ bidder.
The private information is distributed according to a given probability law
$$
\rho(x) dx.
$$
The auctioneer offers to the bidder 
\begin{itemize}
    \item Allocation function
    $$
P \colon X \to X,
    $$
    $P = (P_1, P_2, \cdots, P_n)$. Here
    $P_i(x)$ is the amount of $i$-th item which auctioneer sells to the bidder of type $x$.
    \item The  price function $T \colon X \to \mathbb{R}_+$. This is the amount of money the bidder pays for the bundle $P$.
\end{itemize}

 The map $(P,T)$ is called mechanism. 
 The bidder is supposed to report his type $x$ to the auctioneer. 
 In order to prevent the situation when the bidder reports 
 the false value, the auctioneer has to make some restrictions on the mechanism. More precisely, if the the bidder claims to have type $x'$ instead of $x$, the bidder's utility must decrease:
 \begin{equation}
 \label{IC1bid}
\langle x, P(x) \rangle - T(x) \ge \langle x, P(x') \rangle - T(x').
 \end{equation}
 Thus the mechanism must satisfy assumption (\ref{IC1bid}), which is called "incentive compatibility assumption".

 Another natural assumption is assumption of "individual rationality": the utility has to be nonnegative:
 \begin{equation}
 \label{IR1bid}
\langle x, P(x) \rangle - T(x) \ge 0.
\end{equation}
 
The aim of the auctioneer is to maximize the total expected revenue:
 $$
 \int_{X} P \rho dx \to \max
 $$
 over all mechanisms $(P,T)$ satisfying assumptions (\ref{IC1bid}), (\ref{IR1bid}).

\subsection{The model: many bidders}

Let us consider the case of $m>1$ bidders.
The types of bidders are supposed to be distributed according to a probability law $\rho(x_j) dx_j$ and, in addition, we assume that they are chosen independently.

Thus we work with the model space 
$$
\overline{X} = X^m
$$
and use the following notations:
$$
x \in \overline{X}, \ x = (x_1, \cdots,x_m), \  \ \ \ \ \ x_j \in \mathbb{R}^n, \ 1 \le j \le m
$$
$$
x_j = (x_{1,j}, x_{2,j}, \cdots, x_{n,j}), \ \ \ \ \ x_{i,j} \in [0,1], \ \ 1 \le i \le n, 1 \le j \le m.
$$
The space $\overline{X}$ is equipped with the probability measure
$$
\overline{\rho} dx =\bigotimes_{j=1}^m \rho(x_j) dx_j.
$$
When we talk about distribution of a function $f \colon \overline{X} \to \mathbb{R}$, we mean the distribution of $f$ considered as a random variable on the space $(\overline{X},\overline{\rho} dx )$ equipped with the standard Borel sigma-algebra.

 The auctioneer creates $m$ mechanisms $(P_j,T_j)$, where $P_j$ is the bundle of items received by the bidder $j$ for the price $T_j$. Formally, a mechanism (auction) is a map
$$(P,T) \colon X^m \to \mathbb{R}_{+}^{n \times m} \times\mathbb{R}^m, \  
(P,T) = (P_j, T_j)_{1 \le j \le m}.
$$
$$
P_j = (P_{1,j}, \cdots, P_{n,j}) \in \mathbb{R}_{+}^{n}, \ T_j \in \mathbb{R}.
$$

In case of many bidders  we have to add additional assumption of feasibility: 
the mechanism is feasible if for every item $i$
\begin{equation}
\label{feasible}
\sum_{j=1}^m P_{i,j}(x) \le 1.
\end{equation}
This restriction means that the auctioneer has at most one unit of every item to sell.

As before, the auctioneer is looking for the maximum of the expected revenue:
\begin{equation}
\label{AProblem}
\int_{\overline{X}} \bigl( \sum_{1 \le j \le m} T_j  \bigr) \overline{\rho} dx  \to \max.
\end{equation}

As in the case of  one bidder the auctioneer has to solve the problem of preventing the misreport. Informally, this can be done in the following way: the expected revenue of individual bidder under condition that the other bidders report their true types to the auctioneer must decrease if the bidder misreports his type.

Formally, we introduce the following  {\bf marginals of the mechanism}:

\begin{equation}
\label{expP}
\overline{P}_j(x_j) = \int_{X^{m-1}} P_j(x_1, \cdots, x_m) \prod_{i \ne j}  \ \rho(x_i) dx_1 \cdots dx_{j-1} dx_{j+1} \cdots dx_m.
\end{equation}

\begin{equation}
\label{expT}
\overline{T}_j(x_j) = \int_{X^{m-1}} T_j(x_1, \cdots, x_m) \prod_{i \ne j}  \ \rho(x_i) dx_1 \cdots dx_{j-1} dx_{j+1} \cdots dx_m.
\end{equation}

We note that the mapping $(\overline{P}_j, \overline{T}_j)$ (which is called {\bf reduced mechanism}) is nothing else but the conditional expectation of $(P_j,T_j)$ with respect to the random vector $x_j$. We write
$$
(\overline{P}_j, \overline{T}_j)
= \mathbb{E}^{\overline{\rho}} \bigl( (P_j,T_j) | x_j\bigr).
$$

The mechanism is called 
{\bf incentive-compatible} if
	\begin{equation}\label{eq_SP}
	\big\langle\overline{P}_j(x_j),\,x_j\big\rangle-\overline{T}_j(x_j)\geq \langle\overline{P}_j(x_j'),\,x_j\rangle-\overline{T}_j(x_j')
	\end{equation}
 and 
{\bf individually-rational} if
		\begin{equation}\label{eq_IR}
	\big\langle\overline{P}_j(x_j),\,x_j\big\rangle-\overline{T}_j(x_j)\geq 0
	\end{equation}
 for all $1 \le j \le m$.
Thus $(P,T)$ is incentive-compatible/individually rational if and only if all the conditional expectations  $(\overline{P}_j, \overline{T}_j)$ are incentive-compatible/individually rational.

Finally, we are ready to formulate the general  {\bf auctioneer's problem}: maximize the expected revenue
$$
\int_{\overline{X}} \bigl( \sum_{1 \le j \le m} T_j \bigr) \overline{\rho} dx
$$
over all feasible (\ref{feasible}), incentive-compatible (\ref{eq_SP}) and individually-rational (\ref{eq_IR})
mechanisms.

\begin{remark}
    We remark that all the assumptions we made on $T$ use only the marginals $\overline{T}_j$. Thus an optimal mechanism $(P,T)$ does not necessary satisfy a stronger version of  (\ref{eq_IR}):
\begin{equation}
    \label{expostIR}
\big\langle{P}_j(x),\,x_j\big\rangle-{T}_j(x)\geq 0.
    \end{equation}
But violation of (\ref{eq_IR}) might look counterintuitive. However, it is easy to check that $T$ can be modified  in such a way (for a fixed $P$) that  the marginals remain the same and (\ref{expostIR})  holds. Assumption (\ref{expostIR}) is called "ex-post IR" and  (\ref{eq_IR}) is called "ex-ante IR".
\end{remark}

\subsection{Some closed-form solutions}

The formulation of the auction design problem as stated above goes back  to the celebrated paper of Myerson \cite{Myerson}. In this work Myerson proved equivalence of different types of auctions.
Moreover, he proved that in the case of  $n=1$ item
the auction problem  admits an explicit solution. 

\begin{example}
    Let $n=m=1$, $\rho=1$. Then the auctioneer's problem admits the following solution
    $$
 p = I_{[\frac{1}{2},1]}, \ t = \frac{1}{2}I_{[\frac{1}{2},1]}.
    $$
    This is the so-called { "take-it-or-leave-it" mechanism}. Thus the optimal way to sell one item to a bidder  is to price the item at $1/2$. The expected
revenue is $1/4$.
\end{example}
\begin{example}
    { Let $n=1$, $m>1$, $\rho=1$.
    The optimal mechanism is to sell  nothing if all the bidders have types $x_i \le \frac{1}{2}$.
    If it is not the case, sell one item to the bidder with largest $x_i$ at the price which equals to $\max_{j \ne i} \max(x_j, 1/2)$.}
    \end{example}
We discuss in subsections \ref{cfs1}, \ref{mabp} how the solution of Myerson can be obtained by duality methods.

Apart from the $n=1$ case  explicit solutions are rare. The following example was given in \cite{MV}. 
\begin{example}
\label{pavlov-ex}
Let $n=2$, $m=1$ and $\rho$ be the Lebesgue measure on $X=[0,1]^2$.
The square is divided in four parts as shown on Figure~\ref{fig:u_map} (the picture is taken from \cite{KSTZ}).
If the type of the bidder belongs to $\mathcal{Z}$, then the bidder receives no goods and pays nothing.
In the region $\mathcal{A}$ the bidder receives the first good and pays $2/3$, in $\mathcal{B}$ the bidder receives the second good and pays $2/3$. Finally, in the region $\mathcal{W}$ the bidder receives both goods and pays $(4-\sqrt{2})/3$.
\end{example}
This examples demonstrates, in particular, that even in apparently simple cases the solution can not be reduced to one dimensional ones (the optimal mechanism neither sells goods together nor separately). See in this respect \cite{KSTZ}, D.2.

\begin{figure}[!h]
  \centering
  \includegraphics[width=0.3\textwidth]{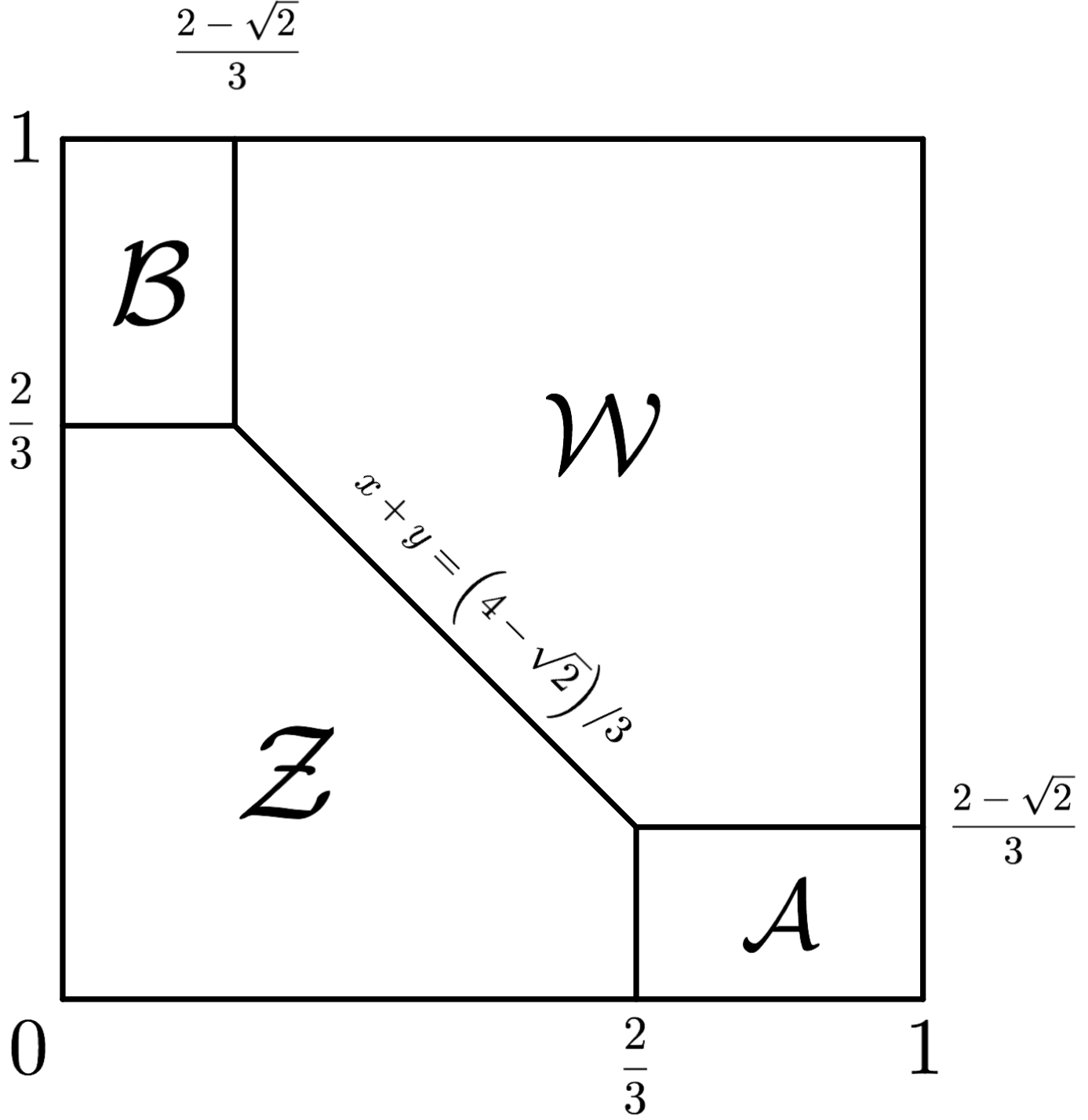}
    \caption{Partition of the square for $n=2, m=1$ and the Lebesgue measure}
   \label{fig:u_map}
\end{figure}

A far-going extension of Example \ref{pavlov-ex} to $m=1, n>2$ was obtained in \cite{GK} by  Giannakopoulos and Koutsoupias. They considered uniform distribution $\rho$ on $X=[0,1]^n$ and conjectured that the so-called  straight-jacket auction (SJA) is optimal in this case. The  straight-jacket auction admits a closed form up to parameters which are roots of certain polynomial.
Giannakopoulos and Koutsoupias computed these roots up to $n=6$ and 
proved that SJA are indeed optimal for $1 \le n \le 6$. Their approach is based on certain duality arguments including some discrete duality results, in particularly, Hall's lemma about matching in bipartite graphs.

Some other examples of explicit solutions can be found in \cite{DDT}.

 There exist 
many results where the authors estimate how some explicit mechanisms are closed to optimal ones.
See, for instance, \cite{HR2017}, \cite{HR2019}, \cite{BILW}, \cite{BGN}.

Unfortunately, in is highly unlikely to find closed-form solutions for $n >1, m>1$. 
 The  result of numerical simulations for $n=m=2$ and the Lebesgue measure of $[0,1]^2$ can be found in \cite{KSTZ}).  
The simulations indicate a complicated structure of the optimal mechanism and suggest that the optimal auction may not admit a closed-form solution even in this benchmark setting.
 

\subsection{The monopolist's problem and  representation of J.-P. Rochet}

In the important particular case of $m=1$ bidder 
the auctioneer's problem can be reduced to a maximization problem for functions (not for mechanisms/mappings). Then the problem turns out to be a particular case of the so-called monopolist's problem.  This approach goes back to Rochet (see \cite{RC} and the references therein).

Let $m=1$. Given a mechanism $(P,T)$ let us consider the {\bf utility function}
$$
u(x) = \langle P(x), x \rangle - T(x)
$$
(note that $(P,T) = (\overline{P}, \overline{T})$ for $m=1$).

Clearly, individual rationality assumption is equivalent to $u \ge 0$. 
The incentive compatibility can be rewritten as follows:
$$
u(x) = \langle P(x), x \rangle - T(x) \ge \langle P(x'), x \rangle - T(x') = u(x') + \langle P(x'), x -x' \rangle .
$$
Then it can be easily concluded that incentive compatibility is equivalent to convexity of $u$ and, in addition, one has:
$$
P = \nabla u
$$
(more precisely, $P \in \partial u$, where $\partial u$ is the subdifferential of $u$).

Thus any incentive compatible and individually-rational mechanism can be recovered from utility $u$.
Clearly, feasibility  $P_{i} \le 1$ and nonnegativity $P_i \ge 0$ is equivalent to condition:
$$
0 \le u_{x_i} \le 1
$$
for all $i$. We say that a function $f$ on $X$ is {\bf increasing}, if for every $i$ function $x_i \to f(x)$ is  increasing  for all fixed values $x_j, j \ne i$. Clearly, for a differentiable function $f$ this is equivalent to assumption $f_{x_i} \ge 0$ for all $i$.

In what follows we denote by 
$
\mathcal{U}
$
the set of {\bf convex increasing functions satisfying
$u(0)=0$}.

Thus we get the following:

\begin{theorem}
\label{RochetRepresentation}{\bf (Rochet) }
For $m=1$ bidder the auctioneer's problem is equivalent to the problem of maximization of
$$
\int_{X} \bigl( \langle x, \nabla u(x) \rangle - u(x) \bigr) \rho(x) dx 
$$
over all convex nonnegative functions $u$ on $[0,1]^n$
satisfying $
0 \le u_{x_i} \le 1
$ for all $i$.

Equivalently maximization can be taken over functions
belonging to $\mathcal{U}$.
\end{theorem}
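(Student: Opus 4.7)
The plan is to set up a bijective correspondence between mechanisms $(P,T)$ satisfying feasibility, IC, and IR, and convex functions $u$ in the stated class, in such a way that the auctioneer's revenue integral becomes the displayed objective. Given such a mechanism, I would define the utility $u(x) := \langle P(x), x\rangle - T(x)$ and rewrite the IC condition (\ref{IC1bid}) as
$$
u(x) \ge u(x') + \langle P(x'),\, x - x'\rangle, \qquad x, x' \in X.
$$
This identifies $u$ as a supremum of affine functions, hence convex, and shows that $P(x') \in \partial u(x')$ everywhere; in particular $P = \nabla u$ at every point of differentiability, which covers Lebesgue-a.e.\ point of $[0,1]^n$ since convex functions on a convex open set are a.e.\ differentiable. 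The IR inequality (\ref{IR1bid}) becomes literally $u \ge 0$, and the feasibility plus nonnegativity bounds $0 \le P_i \le 1$ translate into $0 \le u_{x_i} \le 1$ a.e.

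For the converse, starting from a convex $u \ge 0$ with $0 \le u_{x_i} \le 1$, I would pick any measurable selection $P(x) \in \partial u(x)$ (such selections exist by standard facts on subdifferentials of convex functions) and set $T(x) := \langle P(x), x\rangle - u(x)$. The subgradient inequality immediately gives (\ref{IC1bid}); the gradient bounds inherited by the subdifferential give feasibility and $P \ge 0$; and $u \ge 0$ gives (\ref{IR1bid}). The revenue then rewrites as
$$
\int_X T\,\rho\,dx = \int_{[0,1]^n}\bigl(\langle x, \nabla u(x)\rangle - u(x)\bigr)\rho(x)\,dx,
$$
which is exactly the asserted objective.

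Finally, to pass from this larger admissible class to $\mathcal{U}$, I would observe that the constraint $u_{x_i} \ge 0$ makes $u$ coordinate-wise nondecreasing on $[0,1]^n$, so $u(0) = \min_{[0,1]^n} u \ge 0$. Replacing $u$ by $\tilde u := u - u(0)$ preserves convexity, both gradient bounds, and nonnegativity while yielding $\tilde u(0) = 0$, so $\tilde u \in \mathcal{U}$; and the objective changes by $+u(0) \ge 0$. Hence the supremum over $\mathcal{U}$ equals the supremum over the larger class.

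The main obstacle I anticipate is the measure-theoretic care needed to identify $P$ with $\nabla u$ only where $u$ is differentiable, together with the fact that $\partial u$ is multi-valued on a Lebesgue-null set. This is resolved precisely by working with measurable subgradient selections and by observing that both the revenue integral and the \emph{pointwise} constraints (\ref{IC1bid})--(\ref{IR1bid}) remain satisfied once we use the universal subgradient inequality, which holds \emph{everywhere}, rather than gradient identities that hold only almost everywhere.
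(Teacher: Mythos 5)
Your proof is correct and follows essentially the same route as the paper's (Rochet's) argument: pass to the utility $u = \langle P, x\rangle - T$, observe that incentive compatibility is precisely the subgradient inequality (hence convexity of $u$ with $P \in \partial u$), translate individual rationality and feasibility into $u \ge 0$ and $0 \le u_{x_i} \le 1$, and normalize by subtracting $u(0)$. You merely spell out more carefully than the paper the measurable-selection and a.e.-differentiability points, which the paper leaves to the reader.
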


Let us briefly describe another related problem: the so-called monopolist's problem, going back to Mussa--Rosen \cite{MussaRosen}. Similarly to the auctioneer's problem the initial formulation was given in terms of mechanisms and a reduction  to optimization problem in certain class of utility functions has been obtained in  a seminal paper of Rochet  and Chon{\'e} \cite{RC}.

{\bf Monopolist's problem:} Given a function $\varphi$ and a probability distribution $\rho$ find maximum of
$$
\Phi(u) = \int_{X} \bigl( \langle x, \nabla u(x) \rangle - u(x) - 
\varphi(\nabla u) \bigr) \rho(x) dx 
$$
on the set of convex increasing nonegative functions.

Function $\varphi$ is usually supposed to be convex and is interpreted as a cost of the (multidimensional) product $\nabla u$.
In particular, choosing $\varphi = \delta_{[0,1]^n}$, we get the auctioneer's problem for one bidder.
Here indicator function $\delta_A$ is defined for a given set $A$  as follows : $\delta_A(x) = 0$ for $x \in A$ and $\delta_A(x) = +\infty$ for $x \notin A$.

A systematic study of solutions to the monopolist's 
problem was made in the seminal paper \cite{RC}.
 We emphasize that despite of a classical form of the "energy" functional  $\Phi$ the classical variational approach to the analysis of the monopolist problem does not work here because of  convexity and monotonicity constraints.
In general, any solution $u$ to this problem splits $X$ into three different regions:
\begin{itemize}
    \item Indifference region
    $$
\Omega_0 = \{ u=0 \}
    $$
    \item Bunching region
    $\Omega_1$: $u \ne 0$, $D^2 u$ is degenerated
     \item $\Omega_2$: $u$ is strictly convex: $D^2 u >0$ (nonbunching regions).
\end{itemize}

The word "bunching"   refers to a situation where a group of agents   having different types are treated identically in the optimal solution.
The standard  technique of  calculus of variation is available only inside of $\Omega_2$, where $u$ solves the quasi-linear equation
$$
{\rm div} (\nabla \varphi(\nabla u)) - 1 - {\rm div} (x \cdot \rho)=0.
$$

\subsubsection{Model example and regularity issues}

Rochet and Chon{\'e} suggested a hypothesis about explicit solution for $\varphi(x) = \frac{1}{2}|x|^2$, $\rho=1$ and $n=2$. However, their guess turns out to be wrong. Recently  McCann and Zhang \cite{McCannZhang2} gave a description of the solution partially based on rigorous proofs and partially on numerical simulations. As we will see, despite  the fact that the solution can be described in many details, it looks impossible to give a closed-form solution even in this model case.

The result of McCann and Zhang 
employs regularity property of solutions to monopolist's problem. There exist a limited number of works on regularity 
of solutions to variational problems with convex constraints.
We refer to \cite{CarLac}, \cite{McCannZhang2} and \cite{McCannRankinZhang} and the references therein.
The proofs are delicate and partially based on the advanced regularity technique for the Monge--Ampere equation/optimal transportation. 

In particular, the following theorem (extending earlier result from \cite{CarLac}) was obtained in \cite{McCannRankinZhang} (Theorem 6). Note that the constrains here are slightly different from ours, see,  however, \cite{McCannZhang2}, where these results were applied to the monopolist's problem.

\begin{theorem}
Let $u$ maximize
$$
\int_X \bigl( \langle x, \nabla u \rangle - u - \frac{1}{2} |\nabla u|^2   \bigr) dx
$$
over the set of nonnegative convex functions. Then 
    $u \in C^{1,1}_{loc}(X)$.

    This result is optimal in the following sense: $u \notin C^2(X)$ even for $n=1$.
\end{theorem}

In \cite{McCannZhang2} the following maximization problem was studied:
\begin{equation}\label{R-C}
\int_{[a,a+1]^2} \biggl( \langle x, \nabla v(x) \rangle - v(x) - \frac{1}{2}|\nabla v(x)|^2 \biggr)\,dx\to\max, \quad v\in\mathcal{U}.
\end{equation}

Roughly speaking, there exists four regions: region $\Omega_0$ (triangle), where solution $v$ equals zero; region  $\Omega^0_1$ where $v$ is a function of  $x+y$;
region $\Omega_1^{+} \cup \Omega_1^{-}$, where 
$D^2 v$ has rank one, 
but $v$ has no explicit representation; region $\Omega_2$, where $v$ is strictly convex and solves a linear second-order PDE. The border between  $\Omega_2$ and $\Omega_1^{+} \cup \Omega_1^{-}$ is a "free boundary" and it is a part of the problem to find it. It is  determined by the regularity property $v \in C^{1,1}_{loc}(X)$.

More precisely, the following representation holds:
\begin{eqnarray}\label{in}
\{\Delta v = 0\}&\subset&\Omega_0 = \{(x,y)\in [a,a+1]^2\colon x+y<a+y_0\},\nonumber\\
\{\det D^2 v=0, \ \Delta v > 0\}&\subset&\Omega_1 = \Omega^0_1\cup\Omega^{\pm}_1,\\
\{\det D^2 v>0\}&\subset&\Omega_2=\mathop{\mathrm{cl}}([a,a+1]^2\backslash(\Omega_0\cup\Omega_1)),\nonumber
\end{eqnarray}
where
\begin{eqnarray*}
\Omega^0_1 &=& \{(x,y)\in[a,a+1]^2\colon a+y_0\le x+y\le a+y_1\},\\
\Omega^{\pm}_1&=&\{(x,y)\in\Omega_1\backslash\Omega^0_1\colon \pm(x-y)\ge 0\}.
\end{eqnarray*}

The  solution $v$ has explicit representation on
$\Omega_0$: 
\begin{equation}\label{0}
v(x,y) = 0,\quad (x,y)\in\Omega_0
\end{equation}
 and  on $\Omega_1^0$, where it is degenerated:
\begin{equation}\label{10}
v(x,y) = \frac{3}{8}(x+y)^2-\frac{a}{2}(x+y)-\frac{1}{2}\ln(x+y-2a)+C_0(a),\quad (x,y)\in\Omega^0_1.
\end{equation}
 The shapes of regions $\Omega_1^{\pm}$ are determined by function $h(\theta), R(\theta)$ depending on angle $\theta \in (-\pi/4, \pi/2]$ (see details  in \cite{McCannZhang2}), and function 
$v$ is affine along the segment passing  at angle $\theta$:
\begin{equation}\label{-1}
v(x,y)=v(z(r,\theta)) = m(\theta)r+b(\theta),\quad (x,y)\in\Omega^{-}_1.
\end{equation}
(similarly on $\Omega_1^+$, by symmetry).
Functions $h, R, m,b$ satisfy a system of second-order ODE's, but can  not be computed explicitly. In what follows $n(x,y)$ is the unit normal vector of the corresponding curve.
In the region $\Omega_2$
solve the following free boundary elliptic problem
\begin{equation}\label{2}
\begin{cases}
\Delta v_2(x,y) = 3,&(x,y)\in \mathop{\mathrm{Int}} \Omega_2;\\
\langle\nabla v_2(x,y), n(x,y)\rangle = \langle(x,y), n(x,y)\rangle,&(x,y)\in\partial\Omega_2\cap\partial X;\\
v_2 = \left.v\right|_{\Omega_2},&(x,y)\in\partial\Omega_1\cap\partial\Omega_2
\end{cases}
\end{equation}
Finally, functions $h,R$ must be chosen in such a way that the following  Neumann boundary condition holds:
\begin{equation}\label{neu}
\langle\nabla v_2(x,y), n(x,y)\rangle = \langle\nabla v_1(x,y), n(x,y)\rangle,\quad (x,y)\in\partial\Omega_1\cap\partial\Omega_2,
\end{equation}
here $v_1:=\left.v\right|_{\Omega_1}$.
We need (\ref{neu}) because we know that $v \in C^1(X)$,

The main result of \cite{McCannZhang2} is the following characterization. 

\begin{theorem}
Assume that $v\in\mathcal{U}$ satisfies  \eqref{0}\,--\,\eqref{neu},
and $v$, $\Omega_2$ are Lipschitz.
Then  $v$ is the unique solution to~\eqref{R-C}.
\end{theorem}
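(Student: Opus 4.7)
The strategy is to exploit strict concavity of the functional
$$\Phi(v)=\int_{[a,a+1]^2}\Bigl(\langle x,\nabla v\rangle-v-\tfrac12|\nabla v|^2\Bigr)dx$$
on the convex class $\mathcal{U}$, which reduces the problem to verifying a one-sided first-order inequality that can be checked region by region on the decomposition $X=\Omega_0\cup\Omega_1\cup\Omega_2$ provided by \eqref{in}.

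First I would rewrite $\Phi(v)=-\tfrac12\|\nabla v-x\|_{L^2}^2-\int_X v\,dx+\mathrm{const}$ to observe that $\Phi$ is strictly concave on $\mathcal{U}$. Uniqueness of the maximizer is then automatic, and it suffices to show $\Phi'(v)[w]\le 0$ for every increment $w=\tilde v-v$ with $\tilde v\in\mathcal{U}$. Since $v\in C^1(X)$ by \cite{CarLac} and $\Omega_2$ is Lipschitz by hypothesis, integration by parts on $X$ yields
$$\Phi'(v)[w]=\int_{\partial X}\langle x-\nabla v,n\rangle w\,dS+\int_X(\Delta v-3)w\,dx,$$
and the internal interfaces contribute no residual boundary terms thanks to the $C^1$ matching encoded in \eqref{neu}.

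On $\mathrm{Int}\,\Omega_2$ the Poisson equation $\Delta v=3$ from \eqref{2} kills the bulk contribution, and the Neumann condition $\langle\nabla v,n\rangle=\langle x,n\rangle$ on $\partial\Omega_2\cap\partial X$ from \eqref{2} kills the corresponding boundary piece. On $\Omega_0$ one has $v=0$, so that $w=\tilde v\ge 0$ and $\Delta v=0$; the bulk term is $-3\int_{\Omega_0}\tilde v\,dx\le 0$, and on the part of $\partial X$ lying inside $\Omega_0$ the outward normals are $-e_1,-e_2$ on the faces $x=a$ and $y=a$, so $\langle x,n\rangle=-a\le 0$ and the boundary contribution is $\le 0$ as well.

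The main obstacle is the bunching region $\Omega_1$, where $v$ is affine along each ray by \eqref{-1} but $\Delta v\ne 3$ in general. The plan is to change variables via the parametrization $(x,y)=z(r,\theta)$, reducing the integral over $\Omega_1$ to a family of one-dimensional integrals along rays weighted by the Jacobian $r$. On each such ray $v$ is affine in $r$ while $\tilde v$ is convex in $r$ (restrictions of convex functions to lines are convex), and this one-dimensional convexity, combined with integration by parts in $r$, yields the correct sign for the residual bulk term. The endpoint terms at $r=R(\theta)$ are absorbed using the matching \eqref{neu} across $\partial\Omega_1\cap\partial\Omega_2$, which stitches them to the $\Omega_2$ analysis, while the endpoints at $r=0$ combine with the remaining portion of the $\partial X\cap\partial\Omega_1$ boundary term via a direct sign check analogous to the $\Omega_0$ case. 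Integrating in $\theta\in(-\pi/4,\pi/2]$ by Fubini, extending symmetrically to $\Omega_1^+$, and summing all regional contributions produces $\Phi'(v)[w]\le 0$, which combined with strict concavity identifies $v$ as the unique solution to \eqref{R-C}.
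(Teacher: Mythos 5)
The survey paper does not reproduce a proof of this theorem---it is cited verbatim from McCann and Zhang \cite{McCannZhang2}---so your proposal can only be assessed on its own merits. Your overall strategy (strict concavity of $\Phi$, which follows from the rewriting $\Phi(v)=-\tfrac12\|\nabla v - x\|_{L^2}^2-\int v\,dx+\text{const}$ together with the normalization in $\mathcal U$, then verifying the variational inequality $\Phi'(v)[\tilde v-v]\le 0$ region by region) is indeed the right framework and is essentially the sweeping/ironing approach one must take; your handling of $\Omega_0$ and $\Omega_2$ is sound, and the observation that the $C^1$ regularity from \cite{CarLac} cancels the internal interface terms is correct.

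The genuine gap is the bunching region. First, $\Omega^0_1$ is not covered at all: the polar parametrization $z(r,\theta)$ with apex $(a,h(\theta))$ describes only $\Omega^{\pm}_1$, while on the diagonal strip $\Omega^0_1$ the ruling is by lines $x+y=\text{const}$ and a different change of variables is needed. Second, even on $\Omega^{\pm}_1$ the key assertion that ``one-dimensional convexity, combined with integration by parts in $r$, yields the correct sign'' does not follow from convexity alone. Along each ray $v$ is affine and $\tilde v$ is convex, so $w=\tilde v-v$ is a convex function of $r$, but the sign of $\int_0^{R(\theta)}(\Delta v-3)\,w\,J(r,\theta)\,dr$ (with the correct Jacobian $J=r+h'(\theta)\cos\theta$, not $r$) is governed by the second antiderivative of the weighted coefficient $(\Delta v-3)J$, and a direct computation already on $\Omega^0_1$ shows $\Delta v-3=-\tfrac32+(x+y-2a)^{-2}$ changes sign. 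Establishing that this ``ironing potential'' is one-signed, together with the right endpoint contributions, is precisely where the explicit formulas \eqref{10}--\eqref{-1} and the functions $m,b,h,R$ enter, and you never use them. Third, the boundary piece on $\partial\Omega_1\cap\partial X$ (parts of $\{x=a\}$ and $\{y=a\}$ where $\nabla v\ne 0$) is dismissed as ``a direct sign check analogous to $\Omega_0$,'' but there the integrand is $(v_x-a)\,w$ with $v_x>0$ and $w$ of unconstrained sign, so it is not analogous and needs to be absorbed into the ray-wise balance rather than checked pointwise. Until these points are filled in, the proposal is a plausible outline rather than a proof.
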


\subsubsection{General monopolist's problem}

In this paper we discuss only the classical case of the monopolist's problem
as it was given above. The general model deals with given distribution $\rho dx$ on the space of customers $X \subset \mathbb{R}^n$  and the space of products $Y \subset \mathbb{R}^n$. In the model we a given a product cost $c(y)$ and a function $b(x,y)$  which is the benefit a consumer $x$ gains from a product $y$. The monopolist has to assign the optimal price function $p(y)$. Given $p$ every customer maximizes utility
$$
u(x) = \sup_{y} \bigl( b(x,y) - p(y) \bigr).
$$
Assume that mapping $y(x)$ satisfies $y(x) \in \argmax  \bigl( b(x,y) - p(y) \bigr)$. 
Thus the monopolist is solving  the following optimization problem
$$
\int \Bigl( p(y(x)) - c(y) \Bigr) \rho(y) dy \to \max.
$$
In the classical setting we have $b = \langle x, y \rangle$. The problem can be restated in terms of mechanisms satisfying suitable generalizations of   IC/IR assumptions. In particular, IR assumption turns out to be equivalent to the so-called $b$-convexity. This notion arises in the optimal transport problem with general 
cost function $b$.

Systematical study of this problem, in particular, sufficient condition for convexity of the problem and various regularity issues was made in  \cite{fkmc}, \cite{McCannRankinZhang}, \cite{McCannZhang}, \cite{McCannZhang2}. The analysis is partially based on ideas from the optimal transport theory/ Monge--Amp\'ere equation on Riemannian manifolds.

\subsection{Stochastic domination and reduction to a maximization problem for functions}

Does there exist a reduction in the spirit of Theorem \ref{RochetRepresentation} for many bidders?
The answer is affirmative and the approach heavily relies on the symmetry of mechanism. Let us start with the result of Hart and Reny \cite{HR} for the case of $n=1$ item. Some earlier results in this spirit were obtained by
Matthews \cite{Matthews} and Border  \cite{Border}.

It turns out that the marginals of a feasible mechanism 
can be characterized in terms of stochastic dominance.

Let us consider first $n=1$. 
First, without loss of generality we can restrict ourselves to symmetric mechanisms. If $(P,T)$ is not symmetric, we can symmetrize it : 
$$
 \bigl( P_j^{sym}(x_1, \cdots, x_m), T_j^{sym}(x_1, \cdots, x_m) \bigr)= \frac{1}{m!} \sum_{\sigma} \bigl( P_{\sigma(j)}(x_{\sigma(1)}, \cdots, x_{\sigma(m)}), T_{\sigma(j)}(x_{\sigma(1)}, \cdots, x_{\sigma(m)})\bigr),
$$
where the sum is taken over all the permutations of the bidders.

Clearly, the symmetrized mechanism satisfies the same restrictions and gives the same value to the revenue.
At this step we use that all the bidders are independently and identically distributed.
Thus without loss of generality everywhere below we will talk about {\bf symmetric} mechanisms.

Consider
$\overline{P}$ defined by (\ref{expP}) and for a fixed $\alpha \in \mathbb{R}_+$ set
$
A = \{x :\overline{P}(x) > \alpha\} \subset X.
$
Using symmetry and feasibility of the mechanism one obtains:
\begin{align*}
m \int_{X} \overline{P}(x) I_{A}(x) \rho dx  
& = \int_{\overline{X}} \Bigl( \sum_{j=1}^m P_j (x_j) I_A(x_j) \Bigr) \overline{\rho}dx
\\& \le \int_{\overline{X}}  I_{\exists j: x_j \in A}(x_1, \cdots, x_m) \overline{\rho}dx
= 1 - \Bigl( \int_{A^c} \rho(x) dx\Bigr)^m.
\end{align*}
Put $p = \int_{A^c} \rho dx$. One has
\begin{align*}
    \int_X & \bigl( \overline{P}-\alpha \bigr)_+\rho dx = \int_{A} \overline{P} dx - \alpha(1-p)
    \le \frac{1}{m} - \frac{p^m}{m} - \alpha(1-p)
   \\& \le \frac{1}{m} - \alpha + \frac{m-1}{m} \alpha^{\frac{m}{m-1}} = \int_0^1 (t^{m-1}-\alpha)_{+} dt.
\end{align*}
Since the convex hull of functions $(x-\alpha)_+$ 
is precisely the set of  convex increasing functions, we obtain the following important property:
$$
\int_{0}^1 f(\overline{P}) \rho dx \le \int_0^1 f(t^{m-1}) dx
$$
for every convex increasing function $f$, i.e. the distribution of $\overline{P}$ is {\bf stochastically dominated by} the distribution of $\xi^{m-1}$, where $\xi$ is uniformly distributed on $[0,1]$.
In what follows we write:
$$
\overline{P} \preceq \xi^{m-1}.
$$

It was shown by Hart and Reny that the converse is also true: if $\nu \preceq \xi^{m-1}$, then $\nu$ is a distribution of $\overline{P}$ for some symmetric mechanism $P$.

The general result for many bidders and items unifying 
results of Rochet and Hart--Reny was obtained in \cite{KSTZ}. 
In the same way as for $n=1$ we define
utility function
$$
u(x) = \langle \overline{P}_1(x), x \rangle - \overline{T}_1(x).
$$
for a feasible symmetric mechanism $(P,T)$, satisfying (\ref{eq_IR}) and (\ref{eq_SP}).

Then we prove that $u$ is convex, nonnegative and increasing, see \cite{KSTZ}, Lemma 2. Moreover we conclude that 
$$
 \overline{P}_1 = \nabla u(x)
$$
and, repeating the above arguments, we prove that the distribution of 
$ \overline{P}_{1,i}$ is stochastically dominated by the distribution of $\xi^{m-1}$. Conversely, using the Hart--Reny theorem we can show that every $u \in \mathcal{U}$ satisfying $u_{x_i} \preceq \xi^{m-1}$ defines feasible, individually rational, incentive compatible mechanism on $X^m$ with the same total revenue (\cite{KSTZ}, Lemma 3).
Thus we get the following:

\begin{theorem}
\label{mainreductiotheorem}
    The value of the general auctioneer's problem
    coincides with the maximum of
    $$
m \int_{X} \bigl( \langle x, \nabla u(x) \rangle - u(x) \bigr) \rho(x) dx 
$$
over all convex nonnegative increasing functions $u$ on $[0,1]^n$
satisfying $
u_{x_i} \preceq  \xi^{m-1}
$  for all $0 \le i \le n$, where $\xi$ is uniformly distributed on $[0,1]$.
\end{theorem}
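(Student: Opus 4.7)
The plan is to establish the two inequalities between the auctioneer's optimum and the constrained maximum of the Rochet-type functional, following the blueprint laid out in the paragraph preceding the theorem. The starting move in both directions is the symmetrization reduction: averaging $(P,T)$ over permutations of bidders yields a symmetric mechanism with the same feasibility, IC, IR constraints and the same expected revenue, so it suffices to restrict attention to symmetric $(P,T)$ and, correspondingly, to a single utility $u$ associated to bidder $1$.

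For the inequality \textbf{auctioneer's value $\le$ RHS}, I take a symmetric feasible IC/IR mechanism and set
$$
u(x) = \langle \overline{P}_1(x),x\rangle - \overline{T}_1(x).
$$
Exactly as in the Rochet argument for $m=1$ applied to the reduced mechanism $(\overline{P}_1,\overline{T}_1)$, the inequality \eqref{eq_SP} gives $u(x)\ge u(x')+\langle \overline{P}_1(x'),x-x'\rangle$, so $u$ is convex and $\overline{P}_1\in\partial u$; IR gives $u\ge 0$; nonnegativity of $\overline{P}_1$ gives monotonicity; and $u(0)=0$ may be enforced by subtracting a constant (which only weakly increases revenue). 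For the stochastic dominance constraint, I fix a coordinate $i$ and repeat, componentwise, the Hart--Reny computation displayed in the text: for $\alpha\in\mathbb{R}_+$ and $A=\{\overline{P}_{1,i}>\alpha\}$, symmetry plus $\sum_j P_{i,j}\le 1$ gives $m\int_A(\overline{P}_{1,i}-\alpha)\,\rho\,dx\le 1-(\int_{A^c}\rho)^m-m\alpha(1-p)$, and optimizing in $p=\int_{A^c}\rho$ yields $\int(\overline{P}_{1,i}-\alpha)_+\rho\,dx\le\int_0^1(t^{m-1}-\alpha)_+\,dt$. Since $\{(\cdot-\alpha)_+\}$ generates the cone of convex increasing test functions, this is exactly $u_{x_i}=\overline{P}_{1,i}\preceq\xi^{m-1}$. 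Finally, symmetry of $(P,T)$ gives $\int_{\overline{X}}\sum_j T_j\,\overline{\rho}\,dx = m\int_X\overline{T}_1\,\rho\,dx = m\int_X(\langle x,\nabla u\rangle-u)\rho\,dx$.

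For the reverse inequality, given any $u\in\mathcal U$ with $u_{x_i}\preceq\xi^{m-1}$ for every $i$, I must realize $\nabla u$ as the reduced allocation of a feasible symmetric mechanism. Here I invoke the Hart--Reny theorem of \cite{HR} separately for each of the $n$ items: since the feasibility constraint \eqref{feasible} is decoupled across items, for each $i$ the stochastic dominance $u_{x_i}\preceq\xi^{m-1}$ produces a symmetric feasible allocation rule $(P_{i,j})_{j=1}^m$ on $X^m$ whose reduced marginal is exactly $u_{x_i}$; stacking these $n$ rules yields a symmetric feasible $P=(P_{i,j})$ with $\overline{P}_1=\nabla u$. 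Define prices by the Rochet formula $\overline{T}_1(x_1)=\langle x_1,\nabla u(x_1)\rangle-u(x_1)$, which by convexity and $u(0)=0$ satisfies IC and IR at the reduced level; any pointwise price $T_j$ with this conditional expectation (e.g.\ $T_j(x)=\langle x_j,P_j(x)\rangle - u(x_j)$, adjusted to be symmetric) then delivers a legitimate mechanism. The resulting revenue equals $m\int(\langle x,\nabla u\rangle-u)\rho\,dx$, matching the RHS.

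The main obstacle I anticipate is the reverse direction, and more precisely the bookkeeping in the step where the scalar Hart--Reny construction is lifted to $n$ items while maintaining symmetry under permutations of bidders and producing a jointly measurable mechanism whose \emph{pointwise} prices, not merely the reduced ones, satisfy IC/IR almost surely. This is the content of Lemma~3 of \cite{KSTZ} and relies on the observation that incentive compatibility of the reduced mechanism is enough to recover a true mechanism with the correct expected revenue; the coordinate decoupling of feasibility makes the construction work item-by-item, but one still needs to verify that the stacked allocation, together with the Rochet price, satisfies \eqref{eq_SP} almost surely and not just in conditional expectation, which is where the symmetrization and the convexity of $u$ must be combined carefully.
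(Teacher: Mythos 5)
Your proposal follows exactly the route the paper takes: symmetrization, passage to the reduced mechanism and the Rochet utility $u=\langle\overline{P}_1,\cdot\rangle-\overline{T}_1$, the componentwise Hart--Reny dominance computation for the forward inequality, and the Hart--Reny converse plus the Rochet price formula for the backward inequality — precisely the content the paper delegates to Lemmas~2 and~3 of \cite{KSTZ}, which you correctly identify as the technical core of the reverse direction. No substantive differences; your extra care in flagging the need to lift the scalar Hart--Reny realization to $n$ items while preserving joint measurability, symmetry, and pointwise IC/IR is exactly what Lemma~3 of \cite{KSTZ} addresses.
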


\begin{remark}
    It is easy to check that without loss of generality one can assume that $u(0)=0$. Sometimes it is convenient to write functional in the form
     $$
m \int_{X} \bigl( \langle x, \nabla u(x) \rangle - u(x) \bigr) \rho(x) dx + m u(0),
$$
which can be applied to functions without restriction $u(0)=0$.
\end{remark}

\section{Mass transportation problem}

In this section we briefly recall  some facts about optimal transportation theory, which are important for application to the auction design. A comprehensive presentation of the mathematical theory of optimal transportation can be found in  \cite{Santambrogio}, \cite{Villani}, \cite{BoKo} and  \cite{BoKo2023} (in Russian).

\subsection{General optimal transportation problem}

Let $\mu$ and $\nu$  be probability measures on measurable spaces $X$ and $Y$, and let $c: X \times Y \to \mathbb{R}$ be a measurable function.
The classical Kantorovich  problem is the minimization problem  
$$
\int_{X \times Y}c(x, y) d\pi \to \inf
$$
on the space  $\Pi(\mu, \nu)$  of probability measures on $X \times Y$ with fixed marginals  $\pi_x = \mu$ and $\pi_y = \nu$.

It is well-known that this problem is closely related to another linear programming problem, which is called 
``dual transportation problem'' 
$$
\int_X f d\mu + \int_Y g d\nu \to \sup.
$$
The dual transportation problem is considered on the couples of  functions $(f(x), g(y))$, with $f \in L^1(\mu), g \in L^1(\nu)$, satisfying $f(x) + g(y) \le c(x, y)$ for all $x \in X$, $y \in Y$.

According to the "easy part" of the duality theorem
$$
\int_X f d\mu + \int_Y g d\nu 
\le \int_{X \times Y}c(x, y) d\pi 
$$
for all feasible $\pi, (f,g)$.
These problems are linear programming problems and 
 under broad assumptions the values of both problems coincide.
Complementary slackness condition
gives that solutions $\pi$ and $(f,g)$ satisfy
$$
f(x) + g(y) = c(x,y)
$$
for $\pi$-almost all $(x,y)$.

We say that $T$ is the optimal transportation mapping or solution to the Monge problem if there exists a Kantorovich solution $\pi$, such that $\pi$ is concentrated on the graph of $T$:
$$
\pi\{ (x,T(x)), x \in X\} =1.
$$
It is well-known that Monge solutions exist for a wide class of cost functions, including functions  $c(x,y) = |x-y|^p$, where $p \ge 1$ and $| \cdot |$ is Euclidean norm and $X=Y=\mathbb{R}^N$. If $T$ is a solution of the Monge problem, then $\nu$ is the image of $\mu$ under $T$: $\nu =\mu \circ T^{-1}$.
In addition, for every solution $(f,g)$ of the dual problem one has
\begin{equation}
\label{monge-fgT}
f(x) + g(T(x)) = c(x,T(x)).
\end{equation}
 
\subsection{$L^1$-transportation problem}

The case when $X=Y = \mathbb{R}^N$ and  $c(x,y) = |x-y|$, where $| \cdot |$ is  arbitrary norm, is special. In this case without loss of generality one assume, in addition, that $f=-g$ for any dual solution $(f,g)$. Moreover, $f$ is a $1$-Lipschitz function with respect to $|\cdot |$:
$|f(x) - f(y) | \le |x-y|$.
Thus the value of the dual problem equals
$$
W_1(\mu-\nu) = \max\left\{\int\limits_X ud(\mu-\nu):  |\nabla u|_* \le 1 \right\},
$$
where $|\cdot|_*$ is the dual norm. 

The functional $W_1$ is a norm on the space of the signed finite measures.

To understand informally the structure of the solution let us rewrite the problem in the form
$$
\int_X u d (\mu - \nu) - \int_X \delta_{|\cdot|_*}(\nabla u) dx \to \max, 
$$
where  

$$
\delta_{|\cdot|_*}(x) = 
\begin{cases}
0,& |x|_* \le 1\\
+\infty,& |x|_*>1
\end{cases}
$$
The formal Euler-Lagrange equation of the above functional looks as following
$$
\mu-\nu = - {\rm div} (\nabla \delta_{|\cdot|_*}(\nabla u)).
$$
Note that $\partial\delta(x)=0$ if $|x|_*<1$ and
$$\partial\delta(x)= \sigma \cdot N(x) , \sigma \in [0,+\infty),$$ if $|x|_*=1$,
where $N(x) = \nabla (\frac{1}{2} |x|^2_*) $
(in particular, $N(x)=x$ for Euclidean the norm).

Hence one can rewrite the above equation in the form
\begin{equation}
    \label{l1densityequation}
\mu-\nu = - {\rm div} (\sigma \cdot N(\nabla u)),
\end{equation}
where $\sigma\ge 0$ is unknown function called transportation density.

To understand better the meaning of $u, \sigma$, we observe that by (\ref{monge-fgT}) the optimal transport mapping $T$ pushing forward $\mu$  onto $\nu$ satisfies
$$
u(x) - u(T(x)) = |x-T(x)|.
$$
Since $u$ is $1$-Lipschitz, this means that $u$ is affine on $[x,T(x)]$.

Further analysis  gives the following picture: the space $X$ can be split  $X = \cup_{\alpha} I_{\alpha}$ (up to a set of $\mu$-measure zero) into union of segments $I_{\alpha}$, which are called  transport rays. Transportation of mass goes along these rays. The potential $u$ is affine on every ray, thus $u$ is responsible for the directions of transportation. The function $\sigma$ is responsible for the balance of mass under transportation.  

$L^1$-transportation  problem turned out to be quite non-trivial technically, it was studied in \cite{Sud}, \cite{AP1}, \cite{EGan}, \cite{CFMC}. For results on transportation density and relation to other type of transport (Beckmann, congested transport...) see \cite{Santambrogio} and the references therein.

 More generally,   given a cost $c(x,y)$ and a signed measure $\pi$ on $X=Y$ satisfying $\pi(X)=0$, one can consider optimal  {\bf transshipment problem}
 \begin{equation}
     \label{OTP}
\int_{X \times Y} c(x,y) d \gamma \to \min,
 \end{equation}
 where $\gamma$ satisfies $\gamma_X - \gamma_Y = \pi$. The dual problem takes the form
 $$
\int_X u d \pi \to \max
 $$
 over functions satisfying $u(x) - u(y) \le c(x,y)$.

\subsection{Beckmann's problem}

Beckmann's problem was introduced in the seminal  paper \cite{Beck}
to model commodity flows. Mathematically it can be stated as follows: we are given a (convex) function $\varphi$  and a measure $\pi$ on $\mathbb{R}^N$
 with $\pi(\mathbb{R}^N)=0$. Then 
$$
{\rm Beck}_{\varphi}(\pi) = \inf\left\{\int_{\mathbb{R}^N} \varphi(c(x)) dx, \quad -\operatorname{div} (c)=\pi \right\}, 
$$
where 
$c$ is a vector field and  divergence is understood in the weak sense
$$
\int_{\mathbb{R}^N} \left\langle \nabla f,c \right\rangle dx= 
\int_{\mathbb{R}^N} fd \pi,
$$
for all sufficiently smooth $f$ with compact support.

If $\varphi(x)=|x|$ is a norm on $\mathbb{R}^N$, the Beckmann's problem is equivalent to  $L^1$-transportation problem. The detailed explanation can be found in \cite{Santambrogio}. Informally, let us note that if $u$ is a solution to $L^1$-transportation problem and $c$ satisfies $-\operatorname{div} (c)=\mu-\nu$,  one has
$$ 
W_1(\mu-\nu) = \int u d(\mu - \nu) =
- \int u \operatorname{div} (c)
= \int \langle \nabla u, c \rangle dx
\le \int |c|  dx.
$$
Hence $ {\rm Beck}_{|\cdot|}(\pi) \ge  W_1(\mu-\nu)$.
In the other hand, if one takes $c = \sigma N(\nabla u)$ from equation (\ref{l1densityequation}), one gets
$$
W_1(\mu-\nu) = \int \langle \nabla u, c \rangle dx
=\int |c|  dx.
$$
This implies the claim.

Finally, let us mention that
Beckmann's problem is equivalent to a Monge-Kantorovich-type problem called ``congested optimal transport'';  
see \cite{Santambrogio} for the detailed presentation and references. 
This problem is an optimization problem on the set of curves, it admits a dynamical formulation,
in spirit of the Benamou--Brenier formula.

\subsection{Weak transport}

Let  $\mu, \nu$ be probability measures on $X$.
Assume we are given a cost function
$$
c:X\times \mathcal{P}(X)\to \mathbb{R}_+,
$$
where $\mathcal{P}(X)$ is the space of probability measures on $X$.
Consider the following nonlinear optimization problem:
\begin{equation}\label{eq1}
K_c(\nu|\mu):=\inf_{\sigma \in \Pi(\mu,\nu)}\int_X c(x,\sigma^x)\mu(dx),
\end{equation}
where $\sigma^x$ are conditional distributions, i.e. identity
$$
\int_{X\times X}f(x,y)\sigma(dx\,dy)=
\int_X\int_X f(x,y)\sigma^x(dy)\mu(dx)
$$ holds
for every bounded Borel $f$.
This problem is called nonlinear mass transportation problem. We refer to \cite{BPR} and the references therein.

An important particular case of this problem is given by 
$c(x,y) = |x - \int y \sigma^x(dy)|$ (more generally: $c(x,y) = |x - \int y \sigma^x(dy)|^p$), where $|\cdot|$ is a norm. 
This is the so-called  weak mass transportation transport. It appeared in in { \cite{Marton}}, as a tool for studying concentration inequalities. The systematic study of this problem started in \cite{Goz}.

An important property of the weak transport is the relation to {\bf convex domination} of probability measures. We write
$$
\mu \preceq_c \nu
$$
($\mu$ is dominated by $\nu$ )
if and only if 
$$
\int f d \mu \le \int f d \nu,
$$
where $f$ is arbitrary convex function. Then the following result holds (see \cite{ACJ}, \cite{BVP}) :
\begin{equation}
\label{weak-dualweak}
\inf_{\sigma \in \Pi(\mu,\nu)} \int \bigl|x - \int y \sigma^x(dy)\bigr| \mu(dx)
= \inf_{\eta \preceq_c \nu} W_1(\mu,\eta).
\end{equation}

\begin{remark}
    Recall that according to the Strassen theorem
    relation $\mu_1 \preceq_c \mu_2 \preceq_c \cdots \preceq_c \mu_m$
    is equivalent to existence of a martingale $\xi_1, \xi_2, \cdots, \xi_m$ satisfying
    $$
\xi_i \sim \mu_i, \ 1 \le i \le m.
    $$ 
     A similar result holds for submartingales with stochastic domination $\preceq$ instead of convex domination $\preceq_c$.
\end{remark}

The following duality result for  the weak transportation  problem  has been obtained in \cite{Goz}:
\begin{equation}
\label{weakdualL1}
\inf_{\sigma \in \Pi(\mu,\nu)} \int \bigl|x - \int y \sigma^x(dy)\bigr| \mu(dx) = \sup\left\{\int_X\varphi d\mu-\int_X\varphi d\nu\right\},
\end{equation}
where the supremum is taken over all convex $1$-Lipschitz functions. This will be the key observation for applications of mass transport to auctions.
As we will see in the next section, the auctioneer's problem for $1$-bidder is  equivalent to the dual weak transportation problem (up to a slightly different domain of feasible functions).

The weak transportation  problem attracts 
a lot of attention  withing the last decade. We refer  to the nice survey paper \cite{BVP}, where the reader can find a long list of  applications
(measure concentration, Schr\"odinger problem,  martingale optimal transport). 
Let us mention, in particular,  a nice proof by duality of the Strassen theorem for two marginals (see \cite{Goz}).
Apart from auctions, some applications of weak transport in economics  (labor theory) can be found in \cite{Cho}.

\section{Mass transportation in $1$-bidder case}

A remarkable relation of the $1$-bidder  problem  to the mass transportation problem was discovered by 
Daskalakis, Deckelbaum  and Tzamos in \cite{DDT}.
The main idea of \cite{DDT} was to rewrite the auctioneer's functional
using integration by parts. Indeed,  assume that $\rho$ is sufficiently regular (at least $C^1(X)$). Then
 \begin{align*}
&  m \int_{X} \bigl( \langle x, \nabla u(x) \rangle - u(x) \bigr) \rho(x) dx + m u(0)
= \\& m \Big( \int_{\partial X} u \langle x, \nu \rangle
\rho d \mathcal{H}^{n-1} - \int_{X} \bigl[ \langle x, \nabla \rho(x) \rangle + (n+1) \rho(x) \bigr] u dx + u(0) \Bigr). 
\end{align*}
Here $\mathcal{H}^{n-1}$ is the Hausdorff measure of dimension $n-1$ and $\nu$ is the outer normal to $\partial X$.
We rewrite the auctioneer's problem in the form
$$
\int_X u d\mu \to \max, 
$$
where $u$ is convex and increasing and
$$
\mu = m \Bigl(  \langle x, \nu \rangle \rho \mathcal{H}^{n-1}
- (n+1) \rho dx -  \langle x, \nabla \rho(x) \rangle dx + \delta_0 \Bigr)
$$
is a signed measure with zero balance: $\mu(X)=0$.
We will call $\mu$ {\bf transform measure}. In what follows
$$
\mu = \mu_+ - \mu_{-}
$$
will be the decomposition of $\mu$ into mutually singular
positive and negative parts.

Finally, we get the following reformulation of the {\bf auctioneer's problem with $1$ bidder:}
find
$$
\max_{u \in \mathcal{U}, u \in {\rm Lip}_1(X)} \int u d \mu 
$$
on the set of convex increasing $1$-Lipschitz functions. 

In this form the problem looks quite similar to the dual transportation problem, especially to the dual weak $L^1$-transportation problem. Thus, the main result of 
Daskalakis, Deckelbaum  and Tzamos can be viewed as a variant of relations (\ref{weak-dualweak}), (\ref{weakdualL1}) for a {\bf stochastic} domination instead of convex.
In what follows the notation $\| \cdot \|_1$ means the standard $l^1$-norm:
$$
\| x\|_1 = \sum_{i=1}^n |x_i|.
$$

\begin{theorem}\label{DDT1bid}
    {(\bf Daskalakis--Deckelbaum--Tzamos, duality for $1$ bidder)}. The $1$-bidder problem is equivalent to a weak transportation problem. In particular, the following duality relation holds:
    $$
\max_{u \in \mathcal{U}, u \in {\rm Lip}_1(X)} \int u d \mu 
= \min_{\gamma \in \Gamma} \int_{X^2}\|x-y\|_{1} \gamma(dx dy),
    $$
    where $\Gamma$ is the set of nonnegative measures on $X^2$
    satisfying $Pr_1 \gamma \succeq \mu_+$, $Pr_2 \gamma = \mu_{-}$.
\end{theorem}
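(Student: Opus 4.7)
The plan is to derive this identity as a strong-duality theorem between a linear-programming primal over admissible utilities and a transport-type dual over nonnegative couplings. Decompose $\mu = \mu_+ - \mu_-$ into its Jordan parts; the zero-balance $\mu(X)=0$ gives $\mu_+(X) = \mu_-(X)$, so the stochastic dominance $\mathrm{Pr}_1 \gamma \succeq \mu_+$ compares measures of equal total mass. For the easy direction (weak duality), fix $u \in \mathcal{U}\cap \lip_1(X)$ and $\gamma \in \Gamma$. Convexity and monotonicity of $u$ together with $\mathrm{Pr}_1 \gamma \succeq \mu_+$ give $\int u \, d\mu_+ \le \int u \, d(\mathrm{Pr}_1 \gamma)$, while $\mathrm{Pr}_2 \gamma = \mu_-$ gives $\int u \, d\mu_- = \int u \, d(\mathrm{Pr}_2 \gamma)$. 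The condition $0 \le u_{x_i} \le 1$ forces $u$ to be $1$-Lipschitz with respect to $\|\cdot\|_1$, so subtracting yields
\begin{equation*}
\int u \, d\mu \;\le\; \int_{X^2}(u(x)-u(y)) \, d\gamma \;\le\; \int_{X^2}\|x-y\|_1 \, d\gamma.
\end{equation*}

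For the matching equality I would run a Fenchel--Rockafellar / Lagrangian argument, adapting the Kantorovich--Rubinstein duality to the convex-increasing cone. Introduce
\begin{equation*}
\mathcal{L}(u,\gamma) = \int u \, d\mu - \int_{X^2}\bigl(u(x) - u(y) - \|x-y\|_1\bigr) \, d\gamma(x,y),
\end{equation*}
with $\gamma \ge 0$ acting as the multiplier for the Lipschitz constraint. For each $u \in \mathcal{U}$ the inner infimum $\inf_{\gamma \ge 0}\mathcal{L}(u,\gamma)$ equals $\int u\,d\mu$ if $u \in \lip_1(X)$ and $-\infty$ otherwise, so the primal is $P = \sup_{u \in \mathcal{U}} \inf_{\gamma}\mathcal{L}(u,\gamma)$. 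Reordering gives $\mathcal{L}(u,\gamma) = \int \|x-y\|_1 \, d\gamma + \int u \, d(\mu - \mathrm{Pr}_1\gamma + \mathrm{Pr}_2\gamma)$, and since $\mathcal{U}$ is a cone, $\sup_{u \in \mathcal{U}}\mathcal{L}(u,\gamma)$ is finite iff $\int u \, d(\mu + \mathrm{Pr}_2\gamma - \mathrm{Pr}_1\gamma) \le 0$ for every $u \in \mathcal{U}$, i.e.\ iff $\mathrm{Pr}_1\gamma \succeq \mu + \mathrm{Pr}_2\gamma$ in the convex-increasing order on signed measures. A trimming/disintegration argument, plus the fact that surplus $\mathrm{Pr}_2\gamma$-mass only increases the transport cost, shows that the infimum is attained with $\mathrm{Pr}_2\gamma = \mu_-$, collapsing the condition to $\mathrm{Pr}_1\gamma \succeq \mu_+$ and producing the dual $\inf_{\gamma \in \Gamma} \int \|x-y\|_1 \, d\gamma$.

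The core of the proof is the no-gap exchange $\sup_u \inf_\gamma \mathcal{L} = \inf_\gamma \sup_u \mathcal{L}$, which I would obtain from a Sion-type minimax theorem: $\mathcal{L}$ is bilinear; by compactness of $X$ and the finite total mass of $\mu_-$, the relevant set of $\gamma$'s can be restricted to a weak-$*$ compact family; and the needed semicontinuity in each variable is automatic from continuity of the cost on the compact $X^2$. A Slater-type qualification is available at $u_0(x) = (2n)^{-1}\sum_i x_i$, which lies strictly in the relative interior of $\mathcal{U}\cap \lip_1(X)$.

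The main obstacle is the asymmetric identification of the dual constraints, equality $\mathrm{Pr}_2\gamma = \mu_-$ on one side but only stochastic dominance $\mathrm{Pr}_1\gamma \succeq \mu_+$ on the other. This one-sidedness mirrors the one-sidedness of the cone $\mathcal{U}$ and is delicate because $\mu_+$ typically carries singular pieces (the boundary $\mathcal{H}^{n-1}$ term and the atom $\delta_0$) while $\mu_-$ is absolutely continuous in the interior. The cleanest route is to test the dualization first against the extreme rays of $\mathcal{U}\cap \lip_1(X)$, namely functions of the form $(\langle a, x\rangle - c)_+$ with $a \in [0,1]^n$ and $c \in \mathbb{R}$, reducing the dominance to a family of moment inequalities, and then to realize those inequalities by an actual coupling $\gamma$ via a Strassen-type theorem for the convex-increasing order, exactly as in the weak-transport duality recalled in the previous section.
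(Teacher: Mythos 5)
Your weak-duality half is sound, and the Lagrangian framing is the right idea: dualizing the Lipschitz constraint with a nonnegative multiplier $\gamma$ and then passing the cone constraint $u\in\mathcal{U}$ through to the marginals of $\gamma$ is exactly what the paper does implicitly via Fenchel--Rockafellar with the two functionals $\Theta,\Xi$ on $C(X^2)$. Note, though, that your Lagrangian yields the \emph{combined} constraint $\mathrm{Pr}_1\gamma-\mathrm{Pr}_2\gamma\succeq\mu$ (the ``transshipment'' form appearing in the paper's Remark), whereas the paper's $\Theta^*,\Xi^*$ computation yields the tighter \emph{pair} of one-sided constraints $\mathrm{Pr}_1\gamma\succeq\mu_+$ and $\mathrm{Pr}_2\gamma\preceq\mu_-$; these constraint sets are genuinely different, and you would be reducing from the larger one.

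Two steps are not justified as written. First, the no-gap exchange. Sion's theorem needs one of the two sets to be compact, and neither $\mathcal{U}$ (with no Lipschitz bound imposed yet) nor $\{\gamma\ge0\}$ is; your claim that ``the relevant set of $\gamma$'s can be restricted to a weak-$*$ compact family'' is false, since one can add arbitrary mass to $\gamma$ along the diagonal $\{x=y\}$ without changing the marginal constraint, the marginal difference, or the cost $\int\|x-y\|_1\,d\gamma$. The paper avoids this by applying Fenchel--Rockafellar to functionals on $C(X^2)$ where the Slater-type qualification (continuity of $\Theta$ at some $z_0$ with $\Xi(z_0)<\infty$) replaces compactness; your proposed Slater point $u_0(x)=(2n)^{-1}\sum_i x_i$ lives on $X$, not on $X^2$, and does not plug into that framework. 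You cannot invoke both Sion and a Slater condition interchangeably; they belong to different minimax theorems and you need to commit to one and verify its hypotheses. Second, the reduction to the specific constraint set $\Gamma$. The heuristic ``surplus $\mathrm{Pr}_2\gamma$-mass only increases the transport cost'' is wrong for the same diagonal-mass reason, so trimming is not free. What is actually needed --- and what the paper does, citing Backhoff--Pammer --- is a Strassen-type argument: from a feasible $\gamma$, build random vectors $(\xi_1,\xi_2)$ and $(\eta_1,\eta_2)$ witnessing the dominances, set $Z=\xi_1+\mathbb{E}(\eta_1-\xi_2\mid\xi_1)$, and use Jensen to show the new coupling $\mathrm{law}(\xi_1,Z)$ has at most the original cost while hitting the prescribed marginal form. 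You gesture at a ``Strassen-type theorem'' in your last paragraph but never carry out this construction, and without it the identification of the dual constraint set in the theorem statement is not established.
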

\begin{remark}
    Moreover, another duality relation holds for the related weak transshipment problem:
    $$
\max_{u \in \mathcal{U}, u \in {\rm Lip}_1(X)} \int u d \mu 
= \min_{\gamma : \pi \succeq \mu} \int_{X^2}\|x-y\|_{1} \gamma(dx dy),
    $$
    where $\pi = Pr_1 \gamma - Pr_2 \gamma$. 
    \end{remark}
The proof of Theorem \ref{DDT1bid} is based on the application of the so-called {\bf Fenchel--Rockafellar duality}. Given two convex functionals $\Theta, \Xi$ with values in $\mathbb{R}\cup \{+\infty\}$ on a normed space $E$
one has the following  form of the minimax principle:
\begin{equation}
\label{FR}
\inf_{z \in E} \bigl( \Theta(z) + \Xi(z)\bigr) = \max_{z^* \in E^*}
\bigl(  -\Theta(-z^*) - \Xi(z^*) \bigr),
\end{equation}
provided there exists a point $z_0$ such that $\Theta(z_0)<\infty, \Xi(z_0) < \infty$, and $\Theta$ is continuous at $z_0$. The latter conditions is needed for application of the Hahn-Banach separation theorem and establishing the existence of the maximum point at the right-hand side of the duality.  The proof can be found in \cite{Villani}.

The direct application of the  Fenchel--Rockafellar duality to the functionals 
$$
\Theta (f) = 
 \begin{cases}
      0,   & \mbox{if} \ f(x,y) \ge  - \|x-y\|_1 \\
      + \infty, &  \mbox{in the opposite case}
  \end{cases}
$$
$$
\Xi (f) = 
 \begin{cases}
      \int \psi d\mu_{-} - \int \phi d\mu_{+}, &  \mbox{if}  f(x,y) = \psi(y) - \phi(x), \mbox{where} \ \psi, \phi \in \mathcal{U}(X) \\
      + \infty, &  \mbox{in the opposite case.}
  \end{cases}
$$
gives
$$
\min_{Pr_1 \gamma \succeq \mu_+,  Pr_2 \gamma \preceq \mu_-} \int_{X^2}\|x-y\|_{1} \gamma(dx dy)
= \sup_{\phi,\psi \in \mathcal{U}, \phi(x) - \psi(y) \le \|x-y\|_1 }
\Bigl( \int \phi d\mu_+ - \int \psi d\mu_-  \Bigr),
$$
where $\gamma$ is a measure on $X^2$ with
$$
\gamma_+(X^2) = \gamma_{-}(X^2) = \mu_+(X) = \mu_{-}(X).
$$

The proof will be completed if we show that  without loss of generality one can assume $\varphi=\psi$ and replace condition $Pr_1 \gamma \succeq \mu_{+}$
with $Pr_1 \gamma = \mu_{+}$.

The first statement was proved in \cite{DDT}, Lemma 5. Given $\phi \in \mathcal{U}$ define $\overline{\psi}(y) = \sup_x (\varphi(x) - \|x-y\|_1)$. It is clear that
$$
\int \phi d\mu_+ - \int \psi d\mu_- 
\le \int \phi d\mu_+ - \int \overline{\psi} d\mu_{-}. 
$$
Then  one can show that $\overline{\psi} \in \mathcal{U}$ and, moreover, $\overline{\psi}=\phi$. Thus the replacement of $(\phi,\psi)$ with $(\phi, \overline{\psi}) = (\phi, \phi)$ will increase the value of the functional.

The second statement was proved in \cite{BVP} Theorem 6.1. by the following argument: given $\gamma$ with $Pr_1 \gamma \succeq \mu_+,  Pr_2 \gamma \preceq \mu_{-}$
let $\xi, \eta$ be $X$-valued random variables satisfying
$$
\xi_1 \sim \mu_+, \\ \eta_2 \sim \mu_-, \\ ( \xi_2, \eta_1) \sim \gamma.
$$
and
$$
\xi_1 \le \mathbb{E}(\xi_2|\xi_1), \ 
\eta_1 \le \mathbb{E}(\eta_2|\eta_1).
$$
Such variables can be easily built with the help  of  the Strassen theorem.
We will prove the statement if we show that 
there exists a variable $Z$ satisfying
$Z \preceq \mu_{-}$ and 
$
\mathbb{E} \|\xi_2 - \eta_1\|_1 \ge \mathbb{E} \|\xi_1  - Z\|_1.
$
Define
$$
Z = \xi_1 + \mathbb{E}(\eta_1 - \xi_2|\xi_1).
$$
Indeed,
\begin{align*}
    Z \preceq Z + \mathbb{E}(\xi_2-\xi_1|\xi_1)
    = \mathbb{E}(\eta_1| \xi_1) \preceq \eta_1 \preceq \eta_2 \sim  \mu_{-}.
\end{align*}
Finally, by  Jenssen's inequality
$$
\mathbb{E} \|\xi_2 - \eta_1\|_1 \ge 
\mathbb{E} \| \mathbb{E}(\xi_2 - \eta_1 |\xi_1)\|_1 =
\mathbb{E} \|\xi_1  - Z\|_1.
$$

\begin{example} (Example \ref{pavlov-ex} revisited).
Consider the problem from Example \ref{pavlov-ex}.
The solution $u^{opt}$ and the majorizing measure $\pi^{opt}$ are given below (see the details in \cite{DDT}, in particularly, description of the optimal transportation mapping). 

The optimal function $u^\opt$ is given by
\begin{equation}
\label{uauctionsol}
    u^\opt(x,y) = \begin{cases}
    0 & (x,y) \in  \mathcal{Z} \\
    x - \frac{2}{3}& (x,y) \in \mathcal{A} \\
    y -\frac{2}{3} & (x,y) \in \mathcal{B} \\
    x+y - \frac{4-\sqrt{2}}{3} & (x,y) \in \mathcal{W}.
    \end{cases}
\end{equation}

The transform measure $\mu$  and the optimal ``imbalance'' measure $\pi^\opt$ majorizing  $\mu$ look as follows: 
\begin{equation}\label{eq_m_uniform}
\mu =   \delta_0 
+ \lambda_1|_{[0,1] \times \{0\}} + \lambda_1|_{\{0\} \times[0,1] }
- 3 \lambda_2|_{[0,1]^2},
\end{equation}
\begin{equation}\label{eq_pi_uniform}
\pi^\opt =    \lambda_1|_{[0,1] \times \{0\}} + \lambda_1|_{\{0\} \times[0,1] }
-3 \lambda_2|_{[0,1]^2 \setminus \mathcal{Z}},
\end{equation}
where $\lambda_2, \lambda_1$ are the two- and one-dimensional Lebesgue measures, respectively.
\end{example}

\begin{remark}
    An alternative proof of Theorem \ref{DDT1bid} was proposed in \cite{KM}.
\end{remark}

\section{Duality for the case of $m>1$ bidders}

In this section we discuss a duality statement for $m>1$. The results were obtained in \cite{KSTZ}. We show, in particular, that the dual auctioneer's problem is   naturally related to the Beckmann's transportation problem.

\subsection{Duality in the monopolist's problem}

An important auxiliary result is the duality theorem in monopolist's problem. 
Informally, the duality relation looks as follows: given a convex function 
$\varphi$ and a measure $\mu$ of finite variation satisfying $\mu(X)=0$ one has:
$$
\sup_{u \in \mathcal{U}} \Bigl( \int_{X} ud\mu - \int_{X} \varphi(\nabla u) \rho dx \Bigr)  = \inf_{c: -{\rm div} (c \cdot \rho) \succeq \mu}\int_X \varphi^*(c) \rho dx. 
$$
Here $\mathcal{U}$ is the set of convex increasing functions on $X$ and divergence is a signed measure ${\rm div} (c \cdot \rho)$. We always understand  it in the weak (integration by parts) sense:
for a smooth function $\psi$ on $X$ one has
$$
\int_X  \psi d{\rm div} (c \cdot \rho) = -  \int_X \langle \nabla \psi, c \rangle \rho dx.
$$
As usual, inequality $\sup \le \inf$ is easy to verify:
\begin{align*}
 \int_{X} ud\mu - \int_{X} \varphi(\nabla u) \rho dx 
&  \le -\int_{X} u d{\rm div} (c \cdot \rho) - \int_{X} \varphi(\nabla u) \rho dx 
 = \int_X \langle \nabla u, c \rangle \rho dx - \int_{X} \varphi(\nabla u) \rho dx 
 \\& \le \int_X \varphi^*(c) \rho dx. 
\end{align*}
By complementary slackness argument we conclude that in the absence of  duality gap the extremizers $u_0,c_0$ must satisfy
$$
c_0 \in \partial \varphi(\nabla u_0).
$$

The rigorous proofs of different versions of this result can be found 

1. In \cite{KSTZ} (Proposition 6.7). The proof relies on a version of minimax theorem (Sion's theorem), where one of the underlying spaces is supposed to be compact. This leads to essential restrictions on $\varphi$.

2. In \cite{McCannZhang2}. Function $\varphi$ is assumed to be strictly convex. 
In the McCann--Zhang formulation the assumption
$-{\rm \div}(c \cdot \rho) \succeq \mu$ 
is replaced by another assumption, which does not presume any regularity of the vector field $c$:
$$
\int_X u d \mu \le \int_X \langle c, \nabla u \rangle \rho dx,
$$
for all $u \in \mathcal{U}$
(see also Section 4 below).
The proof is based on the Fenchel--Rockafellar duality and some regularization of the problem.

3. 
 Another relatively simple and general proof based on the   Fenchel--Rockafellar duality 
  was given in \cite{TBK}.

To understand informally the duality statement let us observe the following simple relation: given a convex function $\varphi$ and a distribution $\rho$ we define the energy functional:
 \begin{equation}
     \label{energy}
\Theta (u) = 
\begin{cases}
        \int_X  \varphi(\nabla u)  \rho dx, \ & u \in W^{1,1}(X),  \\
      + \infty, &  \mbox{in the opposite case}.
   \end{cases}
  \end{equation}
  Here the standard notation $W^{1,1}(X)$ is used for the classical Sobolev space.
  \begin{theorem}
  \label{beck=leg-energy}
    The Legendre transform  $$
\Theta^*(\mu) = \sup_{u} \Bigl(\int u d \mu - \Theta(u) \Bigr).
$$ of the energy functional 
(\ref{energy}) 
 coincides with  the Beckmann's transportation functional (\ref{beckmann-functional}) for the dual potential $\varphi^*$.
\begin{equation}
     \label{beckmann-functional}
\Theta^*(\mu) = 
\begin{cases}
   + \infty , &  \mbox{if} \ \mu(X) \ne 0,
\\ \inf_{ c : {\rm div}  (\rho \cdot c) = - \mu } 
\int_X \varphi^*(c) \rho dx, & \mbox{if} \ \mu(X)=0.
\end{cases}
\end{equation}
\end{theorem}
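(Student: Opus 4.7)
The plan has three steps: (i) reduce to $\mu(X)=0$ by translation invariance of $\Theta$; (ii) prove the easy direction by Fenchel--Young and the weak divergence identity; (iii) close the gap via the Fenchel--Rockafellar duality (\ref{FR}) applied after decomposing $\Theta=F\circ A$ with $A=\nabla$ and $F(v)=\int\varphi(v)\rho\,dx$.

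For step (i), $\Theta$ depends on $u$ only through $\nabla u$, so for any $u$ with $\Theta(u)<\infty$ and any $k\in\R$ one has $\int(u+k)\,d\mu-\Theta(u+k)=\int u\,d\mu-\Theta(u)+k\mu(X)$. Sending $k\to\pm\infty$ when $\mu(X)\neq 0$ gives $\Theta^*(\mu)=+\infty$, so henceforth $\mu(X)=0$. For step (ii), fix any vector field $c$ satisfying $\div(\rho c)=-\mu$ in the weak sense, so that $\int u\,d\mu=\int\langle\nabla u,c\rangle\rho\,dx$ for any admissible $u$. Fenchel--Young pointwise then yields
$$
\int u\,d\mu-\Theta(u)=\int\bigl(\langle\nabla u,c\rangle-\varphi(\nabla u)\bigr)\rho\,dx\le\int\varphi^*(c)\rho\,dx,
$$
and taking the sup over $u$ and the inf over admissible $c$ gives $\Theta^*(\mu)\le\inf_c\int\varphi^*(c)\rho\,dx$.

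For step (iii), apply (\ref{FR}) on $E=C^1(\overline X)$ with sup-norm, paired with vector fields $V=L^\infty(X;\R^n)$, to the convex functionals $F(v)=\int\varphi(v)\rho\,dx$ on $V$ and $G(u)=-\int u\,d\mu$ on $E$, noting $\Theta=F\circ A$. The \emph{composition form} of Fenchel--Rockafellar, which is the standard consequence of (\ref{FR}) after incorporating the linear map $A$, reads
$$
\inf_u\bigl[F(Au)+G(u)\bigr]=\sup_{v^*\in V^*}\bigl[-F^*(v^*)-G^*(-A^*v^*)\bigr].
$$
Since $G$ is linear, $G^*$ is the indicator of $\{-\mu\}$, forcing the constraint $A^*v^*=\mu$; integration by parts identifies $A^*v^*=-\div v^*$ in the weak sense, so parametrizing $v^*=\rho c$ turns the constraint into $\div(\rho c)=-\mu$. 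Rockafellar's interchange theorem for normal integrands yields $F^*(\rho c)=\int\varphi^*(c)\rho\,dx$. The LHS equals $-\Theta^*(\mu)$ and the RHS equals $-\inf_c\int\varphi^*(c)\rho\,dx$, which closes the duality.

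The main obstacle is verifying the continuity hypothesis required by (\ref{FR}): one needs a point at which $F\circ A$ is finite and continuous on $E$. This is automatic when $\varphi$ is finite-valued and locally Lipschitz, but is delicate when $\varphi$ can take the value $+\infty$, as in the auctioneer's case $\varphi=\delta_{[0,1]^n}$, where $\Theta$ is the indicator of $\{\|\nabla u\|_\infty\le 1\}$ and has empty interior in $C^1(\overline X)$. The standard remedy is to approximate $\varphi$ from below by a monotone sequence of Lipschitz convex $\varphi_k\nearrow\varphi$, prove the identity for each $\varphi_k$ where continuity is automatic, and pass to the limit using monotone convergence on the primal side and lower semicontinuity of the Beckmann infimum in the integrand on the dual side. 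A secondary technical point is to show the optimal $v^*\in V^*=(L^\infty)^*$ can be represented as an absolutely continuous field $\rho c$; this typically follows from a disintegration or density argument in the spirit of Bouchitt\'e--Buttazzo.
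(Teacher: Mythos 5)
Your overall structure is sound, and step (ii) reproduces verbatim the only part of the argument the paper actually supplies: the weak-duality inequality via Fenchel--Young and integration by parts. The paper itself does not prove the nontrivial direction --- it explicitly defers to \cite{BP}, \cite{Santambrogio}, \cite{TBK} ``in slightly different settings'' --- so your step (iii) is an attempt to fill a gap the paper leaves open, and the route you choose (Fenchel--Rockafellar in its $F\circ A$ form with $A=\nabla$, followed by the normal-integrand interchange theorem to identify $F^*(\rho c)=\int\varphi^*(c)\rho\,dx$) is exactly the one those references use. Step (i) is also fine.

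The trouble is in the remedy you propose for the constraint-qualification issue. You take $\varphi_k\nearrow\varphi$ Lipschitz, establish $\Theta_k^*(\mu)=\mathrm{Beck}_{\varphi_k^*}(\mu)$ for each $k$, and want to let $k\to\infty$. The Beckmann side is under control: $\varphi_k^*\searrow\varphi^*$, and monotone convergence along a near-optimal $c$ shows $\mathrm{Beck}_k(\mu)\searrow\mathrm{Beck}(\mu)$. But on the Legendre side you need $\Theta_k^*(\mu)\searrow\Theta^*(\mu)$, and ``monotone convergence on the primal side'' does not give this: from $\Theta_k(u)\nearrow\Theta(u)$ pointwise you only obtain $\Theta_k^*(\mu)\geq\Theta^*(\mu)$; the interchange $\lim_k\sup_u = \sup_u\lim_k$ can genuinely fail. (A standard infinite-dimensional example: on $\ell^2$ take $f_k(x)=\frac12\sum_{n\leq k}x_n^2$ increasing to $f(x)=\frac12\|x\|^2$; then $f_k^*(p)=+\infty$ whenever $p$ has infinitely many nonzero coordinates, while $f^*(p)=\frac12\|p\|^2<\infty$, so $\inf_k f_k^*\neq f^*$.) As a result, the approximation argument as written only re-derives $\mathrm{Beck}(\mu)\geq\Theta^*(\mu)$, which is the easy direction you already had in step (ii); it does not close the hard inequality $\mathrm{Beck}(\mu)\leq\Theta^*(\mu)$. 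To make this route work you would have to establish the extra convergence $\Theta_k^*\to\Theta^*$ from properties specific to this problem (e.g. a uniform a priori bound allowing restriction to a weak-* compact set of $u$, or a constraint-qualification such as Attouch--Brezis that avoids approximation altogether), or run the approximation in the other direction on the dual data $\varphi^*$.

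Two smaller points. First, ``$E=C^1(\overline{X})$ with sup-norm'' is inconsistent: with the sup-norm $A=\nabla$ is unbounded, so Fenchel--Rockafellar does not apply; you need the $C^1$-norm on $E$ (and correspondingly $Y=C(\overline{X};\R^n)$ with $Y^*=\mathcal{M}(\overline{X};\R^n)$), or a Sobolev pairing $W^{1,1}/L^1$ with dual $L^\infty$. Second, the representability of the optimal $v^*\in(L^\infty)^*$ by a $\rho$-absolutely-continuous field is a genuine issue you correctly flag, but it needs an actual argument (not merely a reference to Bouchitt\'e--Buttazzo); this is precisely the point where the related Theorem \ref{th_vector_fields_min} of the paper has to keep the singular term $\varsigma^{\mathrm{sing}}$, so it should not be waved away.
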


As usual, it is easy to see that one functional dominates the other: indeed, for all admissible  $(u,\mu, c)$ one has
$$
\int u d \mu - \Theta(u)  = - \int u d{\rm div} (\rho \cdot c) - \Theta(u)
= \int \langle \nabla u, c \rangle \rho dx - \int \varphi(\nabla u) \rho dx
\le \int \varphi^*(c) \rho dx.
$$
Hence $$ \inf_{ c : {\rm div}  (\rho \cdot c) = - \mu } 
\int_X \varphi^*(c) \rho dx \ge \sup_{u} \Bigl(\int u d \mu - \Theta(u) \Bigr). 
$$
The proof of the equality $\inf=\sup$ can be found (in slightly different settings) in \cite{BP}, \cite{Santambrogio}, \cite{TBK}.

Below we give a sketch of the proof of duality Theorem  \ref{dual=beck} for monopolist's problem from  \cite{TBK}.

   {\bf Corrigendum.}
     {  \it We warn the reader that it was mistakenly claimed  in \cite{TBK} that 
 the infimum in the duality relation can be always reached on some measure $\pi$ of finite variation (see in this respect Subsection \ref{opp}). However, the statement holds true in the form $\max=\inf$.
 }

\begin{theorem}
\label{dual=beck}
    Let $\varphi$ be a convex lower semicontinuous function, finite on $X =[0,1]^n$ and taking $+\infty$-value outside of  $X$. Let 
    $\mu \in \mathcal{M}_0$, where $\mathcal{M}_0$ is the space of measures with finite variation satisfying $\mu(X)=0$. Then the following duality relation holds:
    $$
    \max_{u \in \mathcal{U}(X)}
    \Phi(u)
    = \inf_{\pi \in \mathcal{M}_0: \mu \preceq \pi} {\rm Beck}_{
\rho, \varphi^*}(\pi), 
    $$
    where 
    \begin{equation}
    \label{phibeck}
\Phi(u) = \biggl( \int_X u d\mu - \int_X \varphi(\nabla u) \rho d x\biggr), \ 
 {\rm Beck}_{\rho, \varphi^*}(\pi)
    = \inf_{c\colon \pi + {\rm div}(c \cdot \rho) =0} \int_X \varphi^*(c) \rho dx.
    \end{equation}
   \end{theorem}

 {\bf Sketch of the proof of Theorem \ref{dual=beck}:} 
 
   We give a proof of this result under assumption that $\varphi$ is bounded on $X$ and infinite on $\mathbb{R}^n \setminus X$. 
   
 We consider functionals $\Theta, \Xi$ on the space of continuous functions  $E = C(X)$ equipped with the uniform norm and   functionals $\Theta^*, \Xi^*$ on the dual space $\mathcal{M}(X)$ of signed measures with finite variation.
 We denote by $E_0 \subset E$ the subspace of functions satisfying $x(0)=0$ and by 
 $\mathcal{M}_0(X)$ the subset of $\mathcal{M}(X)$ satisfying $\mu(X)=0$. 
 
Functional $\Theta$ is given by (\ref{energy}). Let us consider functional
$$
\Xi (u) = 
 \begin{cases}
      - \int_X u d\mu, & u \in \mathcal{U} \\
      + \infty, &  \mbox{in the opposite case}
  \end{cases}
$$
It is easy to check that $\Xi$ is convex and its Legendre transform satisfies
\begin{align*}
\Xi^*(\nu) & = \sup_{u \in \mathcal{U}} \biggl( \int_X u d\mu +\int_X u d\nu \biggr)   = \begin{cases}
  + \infty , & \mbox{if} \ \nu(X) \ne 0,
  \\ \sup_{u \in \mathcal{U}} \int_X u d(\mu + \nu), & \mbox{if} \ \nu(X)=0.
\end{cases}
\end{align*}

Note that
$$
- \max_{u \in \mathcal{U}_0} \biggl( \int_X u d\mu - \int_X \varphi(\nabla u)
\rho dx  \biggr) 
= \min_{ u  \in E} \Bigl( \Theta(u) + \Xi(u)\Bigr). 
$$
We apply the following variant of the Fenchel--Rockafellar duality, which we prove later:
\begin{equation}
\label{mm}
\inf_{ u  \in E} \Bigl( \Theta(u) + \Xi(u)\Bigr)
= -\inf_{\nu \in \mathcal{M}(X)} \Bigl( \Theta^*(-\nu) + \Xi^*(\nu))\Bigr) .
\end{equation}
One has
\begin{align*}
 \max_{u \in \mathcal{U}} &  \biggl( \int_X u d\mu - \int_X \varphi(\nabla u)
\rho dx  \biggr)   = \inf_{\nu \in \mathcal{M}(X)} \Bigl( \Theta^*(-\nu) + \Xi^*(\nu))\Bigr) 
\\& = \inf_{\nu \in \mathcal{M}_0, \nu \preceq - \mu} \Theta^*(-\nu)
= \inf_{\nu \in \mathcal{M}_0, \mu \preceq \nu} \Theta^*(\nu)
= \inf_{\nu \in \mathcal{M}_0, \mu \preceq \nu} {\rm Beck}_{\rho,\varphi^*}(\nu).
\end{align*}

Let us prove (\ref{mm}). Note that (\ref{FR}) is different from (\ref{mm}), because (\ref{mm}) does not claim that the minimum can be reached. We follow the proof of  \cite[Theorem 1.9]{Villani}.
Relation (\ref{mm}) is equivalent to the following:
$$
\inf_{ u  \in E} \Bigl( \Theta(u) + \Xi(u)\Bigr)
= \sup_{\nu \in E^*} \inf_{u,v \in E} \biggl( \Theta(u)  + \Xi(v) + \int_X (u-v) d \nu\biggr).
$$
Taking $u=v$, one gets
$$
\inf_{ u  \in E} \Bigl( \Theta(u) + \Xi(u)\Bigr)
\ge \sup_{\nu \in E^*} \inf_{u,v \in E} \biggl( \Theta(u)  + \Xi(v) + \int_X (u-v) d \nu\biggr).
$$

It is sufficient to prove that there exists a sequence $\nu_n \in \mathcal{M}$ such that
$$
\Theta(u)  + \Xi(v) + \int_X (u-v) d \nu_n
+ \frac{1}{n} \ge    \inf_{ u  \in E} \Bigl( \Theta(u) + \Xi(u)\Bigr)
\quad \forall\, u,v \in E.
$$
Since $\Theta, \Xi$ 
a both invariant with respect to addition of a constant, the desired inequality is possible if and only if  $\nu_n \in \mathcal{M}_0$. Thus from the very beginning we take $u,v \in E_0$. Note that the space  of functionals   $E^*_0$
can be identified with  $\mathcal{M}_0$.

Thus the problem is reduced to the following: prove existence of  $\nu_n \in \mathcal{M}_0$ such that
$$
\Theta(u)  + \Xi(v) + \int_X (u-v) d \nu_n
+ \frac{1}{n} \ge  \inf_{ u  \in E_0} \Bigl( \Theta(u) + \Xi(u)\Bigr) \ \
\forall\,  u,v \in E_0.
$$

Let us define
$$
C_n = \Bigl\{ (u,t) \in E_0 \times \mathbb{R};\ 
\max_{x \in X} \varphi(x) + \frac{1}{n} \ge  t \ge \Theta(u)+\frac{1}{n} \Bigr\},
$$

$$
K = \{ (v,s) \in E_0 \times \mathbb{R}; \ s \le  \inf_{ u  \in E_0} \bigl( \Theta(u) + \Xi(u)\bigr) -  \Xi(v) \}.
$$

Note that $C_n$ and $K$ don't intersect. They are convex, $K$ is closed and $C_n$ is { compact} because $\varphi$ is bounded on $X$, hence the derivatives of $u$ are uniformly bounded for all $(u,t) \in C_n$.
By the Banach-Khan theorem there exists a non-zero separating functional $(\omega_n, \alpha)$, 
$\omega_n \in E^*_0 = \mathcal{M}_0$, $\alpha \in \mathbb{R}$:
$$
\omega_n(u) + \alpha t \ge \omega_n(v) + \alpha s, 
$$
if $ \max_{x \in X} \varphi(x) + \frac{1}{n} \ge  t \ge \Theta(u)+\frac{1}{n}$, $ s \le  \inf_{ u  \in E_0} \bigl( \Theta(u) + \Xi(u)\bigr) -  \Xi(v)$, $u,v \in E_0$.
It is possible only if $\alpha>0$. Set $\nu_n = \omega_n/\alpha$. Dividing inequality by $\alpha$, one gets
$$
\int_X u d \nu_n + \Theta(u) + \frac{1}{n} \ge \int_X v d \nu_n + \inf_{ u  \in E_0} \bigl( \Theta(u) + \Xi(u)\bigr) - \Xi(v).
$$
The proof is complete.

We conclude this subsection with an important auxiliary result from \cite{KSTZ} (Proposition 3), stated here in a more general form.
This is a priori estimate which can be quite useful for studying  maxima of the monopolist's functional.

\begin{proposition}
\label{ae-exist-lip}
Let $\varphi$ be convex non-decreasing and lower-semicontinuous function with $\varphi(0)=0$.
 For every function $u \in \mathcal{U}$, there exists a non-decreasing convex function $\tilde{u}$ with $\tilde{u}(0) = 0$ such that
\begin{equation}
\label{main-ae}
 \langle x, \nabla \tilde{u}(x) \rangle - \tilde{u}(x)  - \varphi(\nabla \tilde{u}) 
 \ge \max\left\{ \langle x, \nabla u(x) \rangle - u(x)  - \varphi(\nabla u),\, 0\right\}
\end{equation}
for all $x \in X$. In particular, this implies that for any function $u^{\mathrm{opt}} \in \mathcal{U}$ maximizing the functional $\Phi(u)$ over $u\in \mathcal{U}$, the inequality
\begin{equation}
\label{mainapest}
 \langle x, \nabla {u^\opt}(x) \rangle - {u^\opt}(x)  - \varphi(\nabla u^{opt})  \ge 0
\end{equation}
holds almost everywhere.
\end{proposition}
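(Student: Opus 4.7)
The construction is to take the upper envelope of the tangent affine functions of $u$ at those points where the integrand is already nonnegative, truncated from below by zero. Write $v(y) := \langle y, \nabla u(y)\rangle - u(y) - \varphi(\nabla u(y))$ for the integrand, $K := \{y \in X : v(y) \ge 0\}$ for the ``good'' set, and $\ell_y(x) := u(y) + \langle \nabla u(y), x - y\rangle$ for the tangent affine function to $u$ at $y$. Define
\[
\tilde u(x) := \max\!\Bigl(0,\,\sup_{y \in K}\ell_y(x)\Bigr).
\]

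The plan is first to check that $\tilde u \in \mathcal{U}$. Convexity is automatic as a supremum of affine functions and the constant $0$. Monotonicity follows because each $\ell_y$ with $y\in K$ has gradient $\nabla u(y)\in[0,1]^n$ (the condition $v(y)\ge 0$ forces $\varphi(\nabla u(y))<\infty$, hence $\nabla u(y)\in X$). The constraint $v(y)\ge 0$ combined with $\varphi\ge 0$ gives $\ell_y(0)=u(y)-\langle y,\nabla u(y)\rangle\le -\varphi(\nabla u(y))\le 0$ for every $y\in K$, so $\tilde u(0)=0$. The subgradients of $\tilde u$ lie in the convex hull of $\{0\}\cup\{\nabla u(y):y\in K\}\subset X$.

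Next I verify the pointwise inequality \eqref{main-ae} by a case analysis with a suitable selection of $\nabla\tilde u(x)\in\partial\tilde u(x)$. If $x\in K$, convexity of $u$ gives $\ell_y(x)\le u(x)$ for every $y$ while $\ell_x(x)=u(x)$ is itself in the sup, so $\tilde u(x)=u(x)$ and $\ell_x$ supports $\tilde u$ at $x$ from below; hence $\nabla u(x)\in\partial\tilde u(x)$, and this choice gives the left-hand side of \eqref{main-ae} equal to $v(x)=\max(v(x),0)$. If $x\notin K$ and $\tilde u(x)>0$, closedness of $K$ and compactness of $X$ produce a maximizer $y^*\in K$ of $y\mapsto\ell_y(x)$; then $\ell_{y^*}$ supports $\tilde u$ at $x$, so $\nabla u(y^*)\in\partial\tilde u(x)$, and a direct computation using $\tilde u(x)=u(y^*)+\langle\nabla u(y^*),x-y^*\rangle$ gives the left-hand side equal to $v(y^*)\ge 0=\max(v(x),0)$. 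If $\tilde u(x)=0$, then $x\notin K$ (otherwise the first case applies) and $0\in\partial\tilde u(x)$ since $x$ is a minimum of $\tilde u$; this choice gives the left-hand side equal to $-\varphi(0)=0=\max(v(x),0)$.

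The consequence \eqref{mainapest} follows by integration: applying \eqref{main-ae} to a maximizer $u^\opt$ gives
\[
\Phi(\tilde u^\opt)\ge\int_X\max(v^\opt,0)\,\rho\,dx\ge\int_X v^\opt\,\rho\,dx=\Phi(u^\opt),
\]
and maximality of $u^\opt$ forces both inequalities to be equalities, whence $\max(v^\opt,0)=v^\opt$ almost everywhere. The main technical obstacle is the subdifferential bookkeeping: one must confirm that the case-by-case subgradient selections yield a measurable section $\nabla\tilde u$, and that the supremum defining $\tilde u$ is actually attained whenever $\tilde u(x)>0$. The latter uses upper semicontinuity of $v$, so it demands some regularity of $u$; in the general case the conclusion is reached by first approximating $u$ by smoother convex functions in $\mathcal{U}$ and then passing to the limit in \eqref{main-ae}.
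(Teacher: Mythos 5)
The paper states this proposition without proof, citing \cite{KSTZ}, so no direct comparison with the paper's argument is available; I will assess your proof on its own merits. Your construction --- the upper envelope of the tangent hyperplanes $\ell_y$ of $u$ at the points $y\in K$ where the integrand is nonnegative, truncated below by the constant $0$ --- is sound, and the case analysis verifying \eqref{main-ae} is algebraically correct: in particular the identity $\langle x,\nabla u(y^*)\rangle - \ell_{y^*}(x) - \varphi(\nabla u(y^*)) = v(y^*)$ is exactly right, and the integration argument for \eqref{mainapest} is standard.

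Two remarks. First, you implicitly use $\varphi\ge 0$ (to get $\ell_y(0)\le 0$, hence $\tilde u(0)=0$) and $\varphi(0)=0$ (in the third case, where the left-hand side becomes $-\varphi(0)$); these hold for $\varphi=\delta_{[0,1]^n}$ and for the $\varphi_i$ appearing in Theorem~\ref{th_vector_fields_inf}, but they are genuinely needed: already $u\equiv 0$ shows the conclusion fails if $\varphi(0)>0$. Second, the closing appeal to approximation of $u$ by smoother functions is unnecessary and, as stated, would require a nontrivial stability argument. You do not in fact need the supremum over $K$ to be attained: for $\tilde u(x)>0$, take a maximizing sequence $y_n\in K$ with $\ell_{y_n}(x)\to\tilde u(x)$; the gradients $\nabla u(y_n)$ lie in the compact set $[0,1]^n$, so after passing to a subsequence $\nabla u(y_n)\to p$, the affine function $z\mapsto \tilde u(x)+\langle p, z-x\rangle$ is a limit of minorants of $\tilde u$ and hence $p\in\partial\tilde u(x)$; then lower semicontinuity of $\varphi$ together with $v(y_n)\ge 0$ (i.e.\ $\varphi(\nabla u(y_n)) \le \langle x,\nabla u(y_n)\rangle - \ell_{y_n}(x)$) gives $\varphi(p)\le \langle x,p\rangle - \tilde u(x)$ directly, which is exactly the needed bound. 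This eliminates the need for any upper semicontinuity of $v$ or regularization of $u$. Finally, the measurable-selection worry is moot for the integration step: where $\tilde u$ is differentiable the subdifferential is a singleton, so your case-by-case choice coincides with $\nabla\tilde u$ almost everywhere.
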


\subsection{Beckmann's problem and duality for many bidders}

Recall that the Beckmann's functionnal is given by
$$
 {\rm Beck}_{\rho, \varphi^*}(\pi)
    = \inf_{c\colon \pi + {\rm div}(c \cdot \rho) =0} \int_X \varphi^*(c) \rho dx.
    $$
Several forms of duality for the multibidder case have been established in \cite{KSTZ}.
The formal proof is given by the following line on computations:

\begin{align*}
   &  \max_{u \in \mathcal{U}(X), u_{x_i} \preceq \xi^{m-1}} \int_X \bigl( \langle x, \nabla u(x) \rangle - u(x) \bigr) \rho dx  = \max_{u \in \mathcal{U}(X), u_{x_i} \preceq \xi^{m-1}} \int_X u d \mu \\& = 
     \max_{u \in \mathcal{U}(X)} \min_{\varphi_i \in \mathcal{U}([0,1])} \Bigl( \int_X u d \mu - \int_X \sum_i \varphi_i(u_{x_i})  \rho dx + \sum_i \int_0^1 \varphi_i(t^{m-1}) dt  \Bigr)
     \\& = 
     \min_{\varphi_i \in \mathcal{U}([0,1])}  \max_{u \in \mathcal{U}(X)} \Bigl( \int_X  u d \mu  - \int_X \sum_i \varphi_i(u_{x_i}) \rho dx + \sum_i \int_0^1 \varphi_i(t^{m-1}) dt  \Bigr)
    \\& =
     \inf_{\varphi_i \in \mathcal{U}([0,1]), \mu \preceq \pi} \Bigl( {\rm Beck}_{\rho, \sum_i \varphi^*_i}(\pi) + \sum_i \int_0^1 \varphi_i(t^{m-1}) dt  \Bigr).
\end{align*}
 
Here we formally interchange $\min$ with $\max$ (minimax principle) and use the  duality for the monopolist's problem.
The rigorous proof is, however, tedious and applies various instruments from functional analysis.
We state the final result in the following theorem:

\begin{theorem}\label{th_vector_fields_inf}
    In the auctioneer's problem  with $m \geq 1$ bidders, $ n \geq 1$ items, and bidders' types distributed on $X=[0,1]^{n}$ with positive density $\rho$, the optimal revenue coincides with 
\begin{equation}\label{eq_vector_fields_inf}
m\cdot \inf_{
  \footnotesize{\begin{array}{c}
       (\varphi_{i})_{1 \le i \le n},\\  
       \pi\succeq_{} \mu
  \end{array}}}\left[ \Bigl( {\rm Beck}_{\rho, \sum_i \varphi^*_i}(\pi) + \sum_{i =1}^n \int_0^1\varphi_{i}\left(t^{m-1}\right){\dd} t\right],
 \end{equation}   
where $\Phi$ is given by $\sum_{i=1}^n \varphi_i(x_i)$ and $\varphi_{i}\colon \R_+\to \R_+\cup\{+\infty\}$ are non-decreasing convex functions with $\varphi_{i}(0)=0$ for each item $i$. 
\end{theorem}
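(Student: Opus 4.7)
The plan is to turn the four-line formal computation displayed just before the theorem into a rigorous proof, combining the reductions already established in the paper with a careful minimax argument.

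First I would combine the stochastic-dominance reduction (the theorem in Section~2.5, which equates the auctioneer's problem to $m\int_X(\langle x,\nabla u\rangle - u)\rho\,dx$ maximized over $u\in\mathcal{U}$ with $u_{x_i}\preceq\xi^{m-1}$) with the integration-by-parts identity from Section~4 to rewrite the objective as $m\int_X u\,d\mu$, where $\mu$ is the transform measure. This step requires $\rho\in C^1(X)$ so that $\mu$ is a signed Borel measure of finite variation, hence lies in $\mathcal{M}_0(X)$. Next I would encode the stochastic-dominance constraint as a Lagrangian penalty: by the very definition of $\preceq$ through convex increasing test functions, for any measurable $u_{x_i}\in[0,1]$ the quantity
\begin{equation*}
\inf_{\varphi_i\in\mathcal{U}([0,1])}\Bigl(\int_0^1\varphi_i(t^{m-1})\,dt-\int_X\varphi_i(u_{x_i})\rho\,dx\Bigr)
\end{equation*}
equals $0$ when the dominance constraint holds (take $\varphi_i\equiv 0$) and $-\infty$ otherwise (scale a violating $\varphi_i$). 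Adding this penalty summed over $i$ to $\int_X u\,d\mu$ converts the constrained maximum into the unconstrained max--min $\max_{u\in\mathcal{U}(X)}\min_{(\varphi_i)}L(u,\varphi)$ displayed in the paper.

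The main obstacle is the third step: swapping the outer $\max_u$ with the inner $\min_\varphi$. Neither $\mathcal{U}(X)$ nor $\mathcal{U}([0,1])^n$ is compact in any natural topology, so Sion's theorem does not apply directly. The cleanest route is to truncate each $\varphi_i$ to a uniformly bounded, uniformly Lipschitz subfamily of $\mathcal{U}([0,1])$ (say $\varphi_i\le k$ with slope at most $k$), which is compact for uniform convergence; the Lagrangian is concave in $u$ and convex and lower semicontinuous in $\varphi$, so Sion applies at the truncated level. Passing to the limit $k\to\infty$ uses monotone convergence together with the weak lower semicontinuity of the Beckmann functional in $(\varphi,\pi)$; one has to verify that optimal multipliers $\varphi_i$ in the untruncated problem are either captured by the sequence or push the objective to the correct limit near the singular endpoint $t=1$. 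A second route, perhaps cleaner, is to apply Fenchel--Rockafellar duality directly to the joint convex Lagrangian on $C(X)\times C([0,1])^n$, imitating the non-compact duality argument written out in the proof of Theorem~\ref{dual=beck} above.

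Once the swap is justified, the inner maximum at fixed $(\varphi_i)$ is exactly the monopolist's functional
\begin{equation*}
\sup_{u\in\mathcal{U}(X)}\Bigl(\int_X u\,d\mu-\int_X\Phi(\nabla u)\rho\,dx\Bigr),\qquad \Phi(p)=\sum_{i=1}^n\varphi_i(p_i).
\end{equation*}
Applying Theorem~\ref{dual=beck} (if necessary, after a preliminary truncation of $\varphi_i$ to make $\Phi$ bounded on $[0,1]^n$ and then a monotone limit) converts this into $\inf_{\pi\succeq\mu}\mathrm{Beck}_{\rho,\Phi^*}(\pi)$; separability of $\Phi$ gives $\Phi^*=\sum_i\varphi_i^*$. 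Combining with the outer infimum over $(\varphi_i)$ and restoring the factor $m$ yields~\eqref{eq_vector_fields_inf}. The a priori bound of Proposition~\ref{ae-exist-lip} can be invoked to restrict the supremum to Lipschitz utilities, which is convenient for the truncation limits above.
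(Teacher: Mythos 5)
Your proposal follows the same route as the paper's own sketch: rewrite the objective as $m\int_X u\,d\mu$ via integration by parts, introduce the dominance constraint $u_{x_i}\preceq\xi^{m-1}$ through the convex-test-function Lagrangian, formally exchange $\max$ and $\min$, and then invoke the monopolist-problem duality (Theorem~\ref{dual=beck}) to replace the inner maximum by a Beckmann infimum. The paper itself only presents this as a formal chain and defers the rigor to the reference \cite{KSTZ}; both routes you offer for the minimax swap (a compact truncation plus Sion's theorem, versus a direct Fenchel--Rockafellar argument in the spirit of the proof of Theorem~\ref{dual=beck}) are exactly the two alternatives the paper attributes to the literature, so your proposal is aligned with the intended argument and does not take a genuinely different path. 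The one place your write-up should be more careful is the unconstrained-domain phrasing: before the Lagrangian penalty is applied, $u\in\mathcal{U}(X)$ only guarantees $u_{x_i}\ge 0$; the bound $u_{x_i}\le 1$ is itself a consequence of the dominance penalty (via test functions like $(t-1)_+$, using that $\varphi_i$ may be $+\infty$ beyond $1$), not an a priori hypothesis on $u$.
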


Another Theorem from \cite{KSTZ}
establishes a duality relation in the form $\max=\min$.
To this end let us consider  the set $\mathcal{C}^\mes$ of  non-negative vector measures
 $\varsigma=(\varsigma_{i})_{i\in\mathcal{I}}$ satisfying the following condition 
\begin{equation}\label{eq_condition_on_c_mes}
\int_X \bigl(\langle \nabla u(x),\, x \rangle - u(x) \bigr)\cdot \rho(x) {\dd} x \leq \sum_{i=1}^n\int_X \frac{\partial u}{\partial x_{i}}(x){\dd} \varsigma_{i}(x)
\end{equation}
for any $u \in \mathcal{U}$. 
By the Lebesgue decomposition theorem, each  $\varsigma_{i}$ can be represented as the sum of the component that is absolutely continuous with respect to $\rho(x){\dd} x$ and the singular one. We get
\begin{equation}\label{eq_Lebesgue_decomposition}
{\dd}\varsigma_{i} = c_{i}(x)\cdot \rho(x){\dd} x + {\dd}\varsigma_{i}^{\sing}(x).    
\end{equation}
If the singular component is absent and $c=(c_i)_{i\in \mathcal{I}}$ is smooth, we can define 
$\pi=-\div_\rho[c]$ and see that the condition~\eqref{eq_condition_on_c_mes} is equivalent to the familiar majorization condition   $-{\rm div} (c \cdot \rho) \succeq m$.

    It is known that the minimizers of the Beckmann's functional belong to  the space $\mathcal{C}^\mes$, which can viewed as a natural completion of the space of integrable vector fields (see Theorem 4.6 in \cite{Santambrogio}). We observe  the same phenomenon considering the dual auctioneer's problem.

\begin{theorem}[Extended dual]\label{th_vector_fields_min}
The optimal revenue in the auctioneer's problem coincides with 
\begin{equation}\label{eq_vector_fields_min}
m\cdot \min_{
  \footnotesize{\begin{array}{c}
       \varphi_{i} \in \mathcal{U}[0,1],\\  \varsigma \in \mathcal{C}^\mes 
  \end{array}}}
 \sum_{i =1}^n\left(\varsigma_{i}^\sing(X)+\int_X  \varphi^*_{i}\big(c_{i}(x)\big) \rho(x){\dd} x +\int_0^1\varphi_{i}\left(t^{m-1}\right){\dd} t\right)
 \end{equation} 
 and the minimum 
 is attained. 
\end{theorem}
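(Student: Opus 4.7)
The plan is to establish two inequalities: that \eqref{eq_vector_fields_min} dominates the optimal revenue (weak duality), and that a minimizer exists, matching the value of Theorem~\ref{th_vector_fields_inf}.

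\emph{Weak duality ($\le$).} I pair any admissible utility $u \in \mathcal{U}$ with $u_{x_i} \preceq \xi^{m-1}$ against any $((\varphi_i)_{i=1}^n, \varsigma) \in \mathcal{U}[0,1]^n \times \mathcal{C}^{\mes}$. Using the defining inequality \eqref{eq_condition_on_c_mes} and the Lebesgue decomposition \eqref{eq_Lebesgue_decomposition},
$$\int_X (\langle x,\nabla u\rangle - u)\rho\,dx \le \sum_i \int_X u_{x_i}\, c_i\,\rho\,dx + \sum_i \int_X u_{x_i}\,d\varsigma_i^{\sing}.$$
I bound the first sum by Fenchel--Young, $u_{x_i} c_i \le \varphi_i(u_{x_i}) + \varphi_i^*(c_i)$, then use the stochastic dominance and convexity/monotonicity of $\varphi_i$ to obtain $\int_X \varphi_i(u_{x_i}) \rho\,dx \le \int_0^1 \varphi_i(t^{m-1})\,dt$. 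The second sum is bounded by $\sum_i \varsigma_i^{\sing}(X)$ since $0 \le u_{x_i} \le 1$. Summing, taking supremum over $u$, and multiplying by $m$ shows that \eqref{eq_vector_fields_min} dominates the optimal revenue.

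\emph{Attainment.} I take a minimizing sequence $(\varphi_i^{(k)}, \pi^{(k)})$ in Theorem~\ref{th_vector_fields_inf}, represented by vector fields $c^{(k)}$ satisfying $\pi^{(k)} + \div(c^{(k)}\rho) = 0$ with Beckmann cost approximating the infimum, and form $d\varsigma_i^{(k)} := c_i^{(k)}\rho\,dx$. The admissibility $\varsigma^{(k)} \in \mathcal{C}^{\mes}$ follows by integration by parts from $\pi^{(k)} \succeq \mu$ tested against $u \in \mathcal{U}$. Since $\varphi_i$ has effective domain $[0,1]$, a Fenchel--Young estimate $\varphi_i^{*(k)}(c) \ge cs - \varphi_i^{(k)}(s)$ at $s$ slightly less than $1$, combined with the control on $\int_0^1 \varphi_i^{(k)}(t^{m-1})\,dt$ via convexity and monotonicity (which forces $\varphi_i^{(k)}(s)$ to be uniformly bounded for $s < 1$), gives a uniform bound on the total variation of each $\varsigma_i^{(k)}$. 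Prokhorov's theorem extracts a weak-$*$ limit $\varsigma_i = c_i \rho\,dx + \varsigma_i^{\sing}$, and Helly's selection theorem (applied on $[0,1-\epsilon]$ with a diagonal argument) yields a pointwise limit $\varphi_i \in \mathcal{U}[0,1]$. Weak-$*$ closedness of \eqref{eq_condition_on_c_mes} (linear in $\varsigma$ and tested against $u \in \mathcal{U}$ with bounded continuous $u_{x_i}$, after Lipschitz approximation) places $\varsigma \in \mathcal{C}^{\mes}$.

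\emph{The main obstacle} is the lower semicontinuity
$$\liminf_{k \to \infty} \int_X \varphi_i^{*(k)}(c_i^{(k)}) \rho\,dx \ge \int_X \varphi_i^*(c_i) \rho\,dx + \varsigma_i^{\sing}(X),$$
which is a Goffman--Serrin / Bouchitt\'e--Buttazzo type relaxation result for convex functionals of vector measures. The coefficient of the singular part is the recession slope of $\varphi_i^*$ at $+\infty$: because the domain of $\varphi_i$ is $[0,1]$, the supremum defining $\varphi_i^*(c)$ is asymptotically attained at $t=1$, so $\varphi_i^*(c)/c \to 1$ and the singular coefficient equals $1$, matching $\varsigma_i^{\sing}(X)$ in \eqref{eq_vector_fields_min}. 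Combined with Fatou's lemma applied to $\int_0^1 \varphi_i^{(k)}(t^{m-1})\,dt$, passing to the liminf in the objective shows that the limit $((\varphi_i), \varsigma)$ is admissible and achieves the value in Theorem~\ref{th_vector_fields_inf}. Together with weak duality from the first step, this establishes equality in \eqref{eq_vector_fields_min} and the attainment of the minimum.
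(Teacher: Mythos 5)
The survey states Theorem~\ref{th_vector_fields_min} without proof, quoting it from \cite{KSTZ} and remarking only that ``the rigorous proof is tedious and applies various instruments from functional analysis''; there is therefore no in-paper argument to compare against. Your direct-method strategy (weak duality $+$ minimizing sequence $+$ compactness $+$ lower semicontinuity) is a reasonable route to this result. The weak-duality step is sound: Fenchel--Young, stochastic dominance (and the fact that $u_{x_i}\preceq\xi^{m-1}$ forces $u_{x_i}\le 1$ a.e., as you can see by testing dominance against the convex increasing function $(t-1)_+$), and the pairing of $u_{x_i}\in[0,1]$ with the nonnegative singular part do give the inequality $\text{revenue}\le\text{RHS}$. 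The total-variation bound on $\varsigma_i^{(k)}$ from the Fenchel--Young estimate $\varphi_i^{*(k)}(c)\ge cs - \varphi_i^{(k)}(s)$ together with $\varphi_i^{(k)}(s)\le C/(1-s^{1/(m-1)})$ (monotonicity of $\varphi_i^{(k)}$ under the uniform control of $\int_0^1\varphi_i^{(k)}(t^{m-1})\,dt$) is also correct.

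The genuine gap is in the lower semicontinuity step, and it is precisely where you flag ``the main obstacle.'' The Goffman--Serrin / Bouchitt\'e--Buttazzo relaxation you invoke handles a \emph{fixed} convex integrand against a varying sequence of measures, giving the recession term on the singular part. In your minimizing sequence both the integrand $\varphi_i^{*(k)}$ and the measure $\varsigma_i^{(k)}$ vary, and pointwise (Helly) convergence of $\varphi_i^{(k)}$ does not by itself deliver the required joint liminf inequality; Legendre transforms are only continuous under Mosco/epi-convergence, which you have not established. A clean way to close this is to bypass the relaxation theorem and use the Rockafellar duality representation
\begin{equation*}
\int_X \varphi_i^*(c_i)\,\rho\,dx \;+\; \varsigma_i^{\sing}(X)
\;=\;
\sup_{g\in C(X),\ 0\le g\le 1}\Bigl(\int_X g\,d\varsigma_i - \int_X \varphi_i\bigl(g(x)\bigr)\,\rho(x)\,dx\Bigr),
\end{equation*}
and its analogue for each $k$ (with no singular term since $\varsigma_i^{(k)}\ll\rho\,dx$). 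Testing the $k$-th functional against a fixed continuous $g$ with $\|g\|_\infty\le 1-\epsilon$, the uniform bound on $\varphi_i^{(k)}$ below $1$ makes $\varphi_i^{(k)}(g)\to\varphi_i(g)$ by dominated convergence, and weak-$*$ convergence of $\varsigma_i^{(k)}$ handles the first term; taking the supremum over such $g$ and then $\epsilon\downarrow 0$ (where the singular contribution $(1-\epsilon)\varsigma_i^{\sing}(X)\to\varsigma_i^{\sing}(X)$ and the absolutely continuous part recovers $\int\varphi_i^*(c_i)\rho\,dx$ by monotone convergence) yields the required liminf inequality. With this replacement the proposal is correct.
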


\begin{remark}
It is possible to obtain the closed form-solution to the dual problem in the setting of Example \ref{pavlov-ex} 
    (see \cite{KSTZ}, D). The solution is in general {\bf non-unique}. Moreover, one can construct different types of solutions:  regular vector fields  and extended vector fields from $\mathcal{C}^{mes}$.
\end{remark}

\subsection{Closed-form solution for $1$-item case by duality}
\label{cfs1}

Solution for $1$-utem case was obtained by  Myerson in \cite{Myerson}. We present below a sketch of duality arguments from \cite{KSTZ}, D1 skipping  some technical details. Function $\rho$ is supposed to be positive and continuous.

In what follows $\varsigma$ is a probability measure on $[0,1]$ and 
 $V$ is the so-called virtual valuation function
$$
V(x) = x - \frac{1 - \mathcal{P}(x)}{\rho(x)}, \quad \mathcal{P}(t) = \int_0^t \rho(x)\,{\dd}x.
$$

The approach based on the following characterization, which can be verified by elementary means (one-dimensional integration by parts formula, see D1 in  \cite{KSTZ}),

\begin{proposition}
For any smooth $u$, 
$$
\int_0^1 (x \cdot u'(x) - u(x))\rho(x)\,{\dd}x = \int_0^1 u'(x)\cdot V(x)\rho(x)\,{\dd}x - u(0),
$$

Denote $F_V(t) = \int_t^{1} V(x)\rho(x)\,{\dd}x$, $F_\varsigma(t) = \varsigma\left([t, 1]\right)$. Then the inequality
\begin{equation}\label{eq:1D_majorization}
\int_0^1 (x \cdot u'(x) - u(x))\rho(x)\,{\dd}x \le \int_0^1 u'(x)\,{\dd}\varsigma
\end{equation}
holds for all smooth convex increasing $u$ with $u(0)=0$ if and only if $F_\varsigma(t) \ge F_V(t)$ for all $t \ge 0$.
\end{proposition}

By duality the value of the primal problem equals
\begin{equation}
\label{1dimdual}
\inf_{\varphi, \varsigma} \int_0^1 \varphi^*(c) \rho dx + \int_0^1 \varphi(t^{m-1}) dt,
\end{equation}
where $\varsigma = c \cdot \rho$, $\varphi$ is convex increasing and $\varsigma$ satisfies 
(\ref{eq:1D_majorization}).

Without loss of generality one can consider only increasing $c$ (because optimal $c$ satisfies $c \in \partial \varphi(\partial u)$). 
Rewrite inequality $F_\varsigma(t) \ge F_V(t)$ 
in the form 
$$
F_\varsigma(\mathcal{P}^{-1}) \ge 
F_{V}(\mathcal{P}^{-1}).
$$
Set
$$
g(t)=F_\varsigma(\mathcal{P}^{-1}) = \int_t^1 c(\mathcal{P}^{-1}(x)) dx.$$
Note that $g$ is nonegative, non-increasing and concave (because $c$ is increasing). Moreover, any such function can be represented in this way. Thus we conclude that the dual solution $c$ does not exceed the concave envelope $G$ of 
$
F_{V}(\mathcal{P}^{-1})
$
(minimal non-negative non-increasing concave function pointwise above $F_{V}(\mathcal{P}^{-1})$). One can prove that this envelope has the form
$$
G = F_{\overline{V}}(\mathcal{P}^{-1}), \ \ F_{\overline{V}}(t) = \int_t^{1} \overline{V}(x)\rho(x) dx,
$$
where
$$
\overline{V} = - G'(\mathcal{P})
$$
is the so-called ironed virtual valuation function.

This means that the choice 
$
c = \overline{V}
$
is the best possible.
Thus, the  solutions $u$ and $c = \overline{V}, \varphi$ must satisfy
$$
\int_0^1 u' V \rho dx = 
\int_0^1 u' \overline{V} \rho dx
= \int_0^1 \varphi^*(\overline{V}) \rho dx + \int_0^1 \varphi(t^{m-1}) dt
$$

Choosing the best $\varphi$ in we get by the complementary slackness (see 
\cite{KSTZ}, D1)
$$
\overline{V}(x) \in \partial \varphi(\mathcal{P}^{m-1}(x))
$$
for almost all $x$
and
$$
\int_0^1 \varphi(u')\rho dx = \int_0^1 \varphi(t^{m-1}) dt.
$$
Analysing these results we arrive at the following conclusion:

\begin{example}
    Let $n=1$ and function $V$ be increasing. Then $\overline{V} = \max\{V,0\}$ and  $\{\overline{V}=0\} = [0,x_0]$ for some $x_0 > 0$. Then the function
    $$
u(x)  = \begin{cases}
0 , &  x \in [0,x_0] \\ 
\int_{x_0}^{x} \mathcal{P}^{m-1}(t) dt, & x \in [x_0,1].
\end{cases}
    $$
is a solution to the reduced auctioneer's problem.
\end{example}

The case of increasing $V$ is called regular. 
In the non-regular case the ironed virtual valuation $\overline{V}$ determines segments where $u'$ is constant and does not coincide with  $\mathcal{P}^{m-1}$. This phenomenon is called "ironing"
(see details in \cite{Myerson}, \cite{Dask}).



\section{Optimal mechanism as a solution to a Beckmann-type problem with constraints}

In this section we come back to the initial (unreduced) problem and consider  the space
$$
\overline{X} = X^m = ([0,1]^n)^m.
$$
The main aim of this section is to find  natural formulations of the 
auctioneer's problem on $\overline{X}$ in terms of optimal transportation/ Beckmann's transportation/ dual transportation problem.

\subsection{Beckmann's problem on $\overline{X}$ with linear constraints}

In this subsection we get 
a Beckmann's type formulation of the dual auction design problem on the space $\overline{X}=X^m$.
We note that the duality Theorem \ref{th_vector_fields_inf} can be formulated without functions $\varphi_i$. Indeed, for a fixed $c$ the functions $\varphi_i$  are solution to an auxiliary one-dimensional optimisation problem which has a closed-form solution (this is actually, the optimal transportation problem). Applying this optimization, the reader gets the following result (see \cite{KSTZ} for details):

\begin{proposition}
\label{nophi2}
The value of the auctioneer's problem is
$$
 \inf_{P} \mathbb{E}^P
  \sum_{1 \le i \le n} 
  \max (x_{i,1}, x_{i,2}, \cdots, x_{i,m}), 
$$
where for every fixed  $i$
the  random variables $x_{i,j}, 1 \le j \le m$ are independent and have distribution $Pr_i P$. The infimum is taken over distributions $P$
 on  $\mathbb{R}^{n}$  satisfying the following  property: there exists a vector field  $c \in \mathcal{C}$
  with distribution $Pr_i P$.
\end{proposition}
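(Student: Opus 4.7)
}

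The plan is to start from Proposition \ref{nophi1}, which already rewrites the auctioneer's value (up to the overall factor $m$ coming from Theorem \ref{th_vector_fields_inf}) as
$$
m \cdot \inf_{c\in\mathcal{C}}\sum_{i=1}^n \int_X |c_i|\, F_i^{m-1}(|c_i|)\, \rho\, dx,
\qquad F_i(t)=\int_{\{|c_i|\le t\}}\rho\, dx,
$$
and then to recast each summand as the expectation of the maximum of $m$ i.i.d.\ copies. The key step is a one-line identity: if $Y_{i,1},\ldots,Y_{i,m}$ are i.i.d.\ with the one-dimensional law $\varrho_i$ on $\mathbb{R}_+$ given by the pushforward of $\rho\, dx$ under $|c_i|$, then $\max_j Y_{i,j}$ has c.d.f.\ $F_i^m$, and a Stieltjes computation together with a change of variables back to $X$ gives
$$
\mathbb{E}\bigl[\max_{1\le j\le m} Y_{i,j}\bigr]
= \int_{\mathbb{R}_+} y\, dF_i^m(y)
= m\int_{\mathbb{R}_+} y\, F_i^{m-1}(y)\, dF_i(y)
= m\int_X |c_i|\, F_i^{m-1}(|c_i|)\, \rho\, dx.
$$
This exactly cancels the outer factor $m$, turning the functional from Proposition \ref{nophi1} into $\sum_i \mathbb{E}[\max_j Y_{i,j}]$.

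Next I would introduce the joint distribution $P$ on $\mathbb{R}^n_+$ as the pushforward of $\rho\, dx$ under the vector-valued map $x\mapsto (|c_1|(x),\ldots,|c_n|(x))$, so that its $i$-th marginal $Pr_i P$ is precisely $\varrho_i$. By linearity of expectation the sum $\sum_i \mathbb{E}[\max_j Y_{i,j}]$ depends on $P$ only through its one-dimensional marginals $(Pr_i P)_{i}$, and it can be written as $\mathbb{E}^P \sum_i \max(x_{i,1},\ldots,x_{i,m})$ where for each $i$ the family $(x_{i,j})_{1\le j\le m}$ is i.i.d.\ with law $Pr_i P$ (the joint law across different $i$ is irrelevant, which is why only the marginals appear in the admissibility condition).

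The final step is to match the two infima. Any admissible $c\in\mathcal{C}$ yields an admissible $P$ by the above pushforward, realizing the value $m\sum_i \int_X |c_i| F_i^{m-1}(|c_i|)\rho\, dx$ as $\mathbb{E}^P\sum_i \max_j x_{i,j}$; conversely, any admissible $P$ is, by definition, the pushforward of some $c\in\mathcal{C}$, and the functional evaluated on $P$ coincides with the value of the functional in Proposition \ref{nophi1} at that $c$. Taking the infimum on both sides yields the claim.

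I do not expect any substantial obstacle here: the argument is essentially a change-of-variables identification together with the standard formula for the expectation of the maximum of i.i.d.\ random variables in terms of the parent c.d.f. The only mild subtlety is keeping track of the factor $m$ (it is the $m$ from Theorem \ref{th_vector_fields_inf} that is absorbed by the one $m$ appearing in the density of the maximum) and noting that the joint structure of $P$ across indices $i$ plays no role, so the admissibility condition is most naturally phrased in terms of the marginals $Pr_i P$ being simultaneously realizable by a single vector field $c\in\mathcal{C}$.
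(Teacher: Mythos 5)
Your proposal is correct and follows essentially the same route as the paper: the key identity $\mathbb{E}\bigl[\max_{1\le j\le m}Y_{i,j}\bigr]=m\int_X |c_i|\,F_i^{m-1}(|c_i|)\,\rho\,dx$ (max of i.i.d.\ copies has c.d.f.\ $F_i^m$, plus change of variables back to $X$) is precisely what the paper uses, with the outer factor $m$ from Theorem~\ref{th_vector_fields_inf} cancelling the $m$ produced by the maximum. Your explicit construction of $P$ as the pushforward under $x\mapsto(|c_1|(x),\ldots,|c_n|(x))$ and the remark that the functional sees only the marginals $Pr_iP$ simply makes precise the step the paper leaves implicit.
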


We equip  $\overline{X}$
 with the product measure $$\overline{\rho}(x) d x = \bigotimes_{j=1}^m \rho({x}_j) d {x}_j$$
 and the   norm
$$  |x|_{\overline{X}} =
\sum_{1 \le i \le n} 
  \max (|x_{i,1}|, |x_{i,2}|, \cdots, |x_{i,m}|).
  $$
In what follows we will also use the dual norm
$$  |x|_{\overline{X}^o} =
\max_{1 \le i \le n} 
  \sum_{j=1}^m |x_{i,j}|.
  $$
For every vector field
$$
c = (c_1, \cdots, c_{n})
$$ on $X$ let us define the extended vector field on $\overline{X}$
\begin{equation}
    \label{cextend}
\overline{c}({x}_1, \cdots, {x}_m) 
= (c({x}_1), \cdots, c({x}_m)).
\end{equation}
We will apply the following notation:
\begin{equation}
\label{cextend-meas}
    \overline{\pi} = \sum_{j=1}^m \rho({x}_1 ) d {x}_1 \times \cdots  \times \pi(d {x}_j) \times \cdots 
\rho({x}_m ) d {x}_m.
\end{equation}
Note that
$$
{\rm div}_{\overline{\rho}} \overline c
= \sum_{j=1}^m \rho({x}_1 ) d {x}_1 \times \cdots  \times {\rm div}_{\rho} c({x}_j) \times \cdots 
\rho({x}_m ) d {x}_m.
$$
 We get immediately the following result  from Proposition \ref{nophi2}: 
\begin{proposition} \label{nophi3} {\bf (Auction's design problem as a Beckmann's problem with linear constraints)}
The auction design problem is equivalent to the following minimization problem: 
$$
\inf_{\pi \succeq \mu, \overline{c} } \int_{\overline{X}} |\overline{c}|_{\overline{X}}  
\overline{\rho}({x}) d {x},
$$
where  $\overline{c}$ satisfies the following assumptions:
\begin{enumerate}
\item
$\overline{c}$ has the form 
\eqref{cextend}
\item
$
{\rm div}_{{\rho}}  c
= - {\pi}.
$
\end{enumerate}
\end{proposition}

\subsection{Optimal mechanism and the Beckmann's problem}
\label{mabp}

In this subsection we prove a duality relation 
on  $\overline{X}$.
Since we restrict ourselves to symmetric mechanisms, all the functions and vector fields are supposed to be symmetric in this section. 
Recall that a function $f : \overline{X} \to  \overline{X}$ is symmetric 
if $f \circ T_{\sigma} = f$ and 
a field  $
F \colon \overline{X} \to \overline{X}$ is  symmetric, if 
$$
F(T_{\sigma}) = T_{\sigma}(F).
$$
Here $\sigma$ is arbitrary permutation of $m$ elements and
$T_\sigma(x_1, \cdots,x_m) = (x_{\sigma(1)}, \cdots, x_{\sigma(m)})$.

Let $L$ be the space of { $L^2(\overline{\rho}dx)$-integrable} vector fields 
$$
F \colon \overline{X} \to \overline{X}, \
F = (F_1, F_2, \cdots, F_m), \ F_{j} = (F_{1j}, \cdots, F_{nj}),
$$ with the following properties:
\begin{itemize}
\item 
$F$ is symmetric
\item 
For every $k$ the conditional expectation of $F_k$ with respect to $x_k$ vanishes
$$
\mathbb{E}^{\overline{\rho}} ( F_{ik}| {x}_k)
= \int_{\prod_{j \ne k} X_j} F_{ik} \prod_{j \ne k} \rho({x}_i) d{x}_j =0, \ 1 \le i \le n, \ 1 \le j \le m.
$$
\end{itemize}

In what follows $L^{\bot}$ denotes the space of symmetric fields orthogonal to $L$ in $L^2(\overline{\rho}dx)$-sense. These are precisely the $L^2(\overline{\rho}dx)$-fields having the form
$$
\overline{c}(x) = (c(x_1), c(x_2), \cdots, c(x_m)).
$$
Finally, we denote by
$$Lip^{sym}(| \cdot|_{\overline{X}}, L)$$ 
 the 
the set of symmetric Lipschitz functions  $v$ with the following property: there exists $F \in L$ such that
$$
| \nabla v + F|_{\overline{X}^\circ} \le 1
$$
$\overline{\rho}dx$-a.e.
The subset 
$$Lip^{sym}_{+}(| \cdot|_{\overline{X}}, L) \subset 
Lip^{sym}(| \cdot|_{\overline{X}}, L)$$
are precisely the functions satisfying additional assumption
$$
\nabla v + F \ge  0, 
$$
i.e. $\partial_{x_{i,j}} v + F_{i,j} \ge 0$ a.e. for all $1 \le i \le n, 1 \le j \le m$.

\begin{theorem}
    The values of the optimization problems below coincides with the value of the auctioneer's problem:
    \begin{enumerate}
        \item 
        $$
m \cdot \max_u \int_{X} u d \mu = \max_v \int_{\overline{X}} v d \overline{\mu},
        $$
        where maximum is taken over  $u \in \mathcal{U}_0$ such that, in addition, 
        $v \in Lip^{sym}_{+}(| \cdot|_{\overline{X}}, L) $, where
        $$v(x) = \sum_{j=1}^m u(x_j)$$
        \item 
        $$
\inf_{{\pi} \succeq {\mu}} \sup_{v  \in Lip^{sym}(| \cdot|_{\overline{X}}, L)} \int_{\overline{X}} v d\overline{\pi}.
        $$
    \end{enumerate}
\end{theorem}
\begin{proof}

Denote by $R$  the value of the auctioneer's problem.

To prove the first statement let us come back to the canonical formulation of the auctioneer's problem in terms of mechanisms. Let $u$, $v$ satisfy the assumption of the theorem, in particular, there exists  $F \in L$ satisfying $|\nabla v + F|_{\overline{X}^0} \le 1$. Let us find the corresponding feasible  mechanism  $(P,T)$,  satisfying  IR, IC assumptions. Set $P = \nabla v + F$. 
Clearly, $P$ is feasible and   $\overline{P}_1 = \mathbb{E}^{\overline{\rho}}(P_1|x_1) = \nabla u(x_1)$. Set $T_1(x_1)  = \langle x_1, \nabla u(x_1) \rangle - u(x_1)$.
Assumptions IR, IC hold, because $u$ is an increasing convex function satisfying $u(0)=0$.
Conversely, we know  (see explanations in subsection 2.5), 
that given a symmetric feasible IR, IC mechanism $(P,T)$, one can define 
$u \in \mathcal{U}_0$ by formula $\overline{P}_1 = \mathbb{E}^{\overline{\rho}}(P_1|x_1) = \nabla u(x_1)$. Then function
$v(x) = \sum_{j=1}^m u(x_j)$ satisfies assumptions of item  1, because  we can take $F = P - \nabla v$.

To prove the second statement let us show first that
$
\inf_{{\pi} \succeq {\mu}} \sup_v \int_{\overline{X}} v d\overline{\pi}
        $ is not bigger than $R$. Indeed, take a feasible $v$ and a vector field $\overline{c} \in L$ satisfying ${\rm div}_{\overline{\rho}} \overline{c} = - \overline{\pi}$. Then
        $$
    \int_{\overline{X}} v d \overline{\pi} = -  
    \int_{\overline{X}} v d ({\rm div}_{\overline{\rho}} \overline{c})
    = \int_{\overline{X}} \langle \nabla v, \overline{c} \rangle \overline{\rho} dx  = \int_{\overline{X}} \langle \nabla v + F, \overline{c} \rangle \overline{\rho} dx \le \int_{\overline{X}} |\overline{c}|_{\overline{X}} \overline{\rho} dx. 
        $$
Hence
$$
 \inf_{{\pi} \succeq {\mu}} \sup_{v} \int_{\overline{X}} v d \overline{\pi}
 \le  \inf_{\pi \succeq \mu, {\overline{c}} } \int_{\overline{X}} |\overline{c}|_{\overline{X}}  
\overline{\rho}({x}) d {x}.
$$
        But according to Proposition \ref{nophi3} the right-hand side is precisely $R$.

        To see that the value of the functional coincides with  $R$, note that  for every ${\pi} \succeq {\mu}$ one can choose   $v = \sum_{j=1}^m u(x_j)$, where $u \in \mathcal{U}_0$. Then
        $$
\int_{\overline{X}} v d \overline{\pi}
 = m \int_{X} u d \pi \ge m \int_X u d\mu
 = \int_{\overline{X}} v d \overline{\mu}.
        $$
        By the previous point the maximum value of the right-hand side on the set of functions $v = \sum_j u(x_j)$ ($u$ is convex increasing) with $ v \in Lip^{sym}_{+}(| \cdot|_{\overline{X}}, L)$  is exactly $R$. Hence
        $$
\sup_{v \in Lip^{sym}(| \cdot|_{\overline{X}}, L)}  \int_{\overline{X}} v d \overline{\pi} \ge R.
        $$ Thus we get $
\inf_{\pi \succeq \mu} \sup_{v \in Lip^{sym}(| \cdot|_{\overline{X}}, L)}  \int_{\overline{X}} v d \overline{\pi} \ge R
        $ and the  proof is complete.
\end{proof}

It is clear from the proof that any optimal mechanism
$P$ admits the structure
\begin{equation}
    \label{punreduced}
P = \nabla v_0 + F, \ F \in L,
\end{equation}
where $v_0 = \sum_{j=1}^m u_0(x_j)$ and $u_0$ is a solution to the reduced problem.

Let us describe some immediate consequences of this representation. For the sake of simplicity we shall assume that there exists an optimal vector field $c$ without singular components.

    \begin{remark}
        We do not know whether this true in general, see Theorem \ref{th_vector_fields_min}.
    \end{remark}

Let us apply the complementary slackness condition:
$$
    \int_{\overline{X}} v d \overline{\pi}  = \int_{\overline{X}} \langle \nabla v + F, \overline{c} \rangle \overline{\rho} dx =\int_{\overline{X}} |\overline{c}|_{\overline{X}} \overline{\rho} dx. 
        $$
        Equality is possible if and only if
        $$
        \langle P, \overline{c} \rangle = 
\langle \nabla v + F, \overline{c} \rangle  = |\nabla v + F|_{\overline{X}^o}
|\overline{c}|_{\overline{X}} = |\overline{c}|_{\overline{X}}.
        $$
Equivalently 
$$
\sum_{i=1}^n \sum_{j=1}^m c_{i,j}(x_j) P_{i,j}(x) =  \sum_{i=1}^n \max_{1 \le j \le m} c_{i,j}(x_j).
$$
This leads to the following observation:
\begin{proposition} Assume that there exist an optimal $c$ without singular components. Then almost all $x \in \overline{X}$ satisfy the following property: if the  collection of $m$ valuation functions
$$
( c_{i,1}(x_1),   \cdots, c_{i,m}(x_m))
$$
has the unique non-zero maximum $c_{i,j_0}(x_{j_0})$, then the optimal mechanism is choosing this maximum: $P_{i,j_0}(x)=1$ and $P_{i,j}(x)=0$ for $j \ne j_0$. This means that the $i$-th item is given to the $j_0$-th bidder.

If the sequence of valuation functions has many non-zero maxima, then the optimal mechanism distributes the item between them.
\end{proposition}

\begin{remark}
Note that this rule can be extended to the case when optimal $c$ has singular components. Indeed, if $c_{i,j_0}$ admits a singular component $\varsigma_{i,j_0}$, then it is concentrated on the set $u_{x_{i,j_0}}(x_{j_0})=1$ (see Corollary 2 in \cite{KSTZ}). If $u_{x_{i,j}}(x_{j})<1$ for other $j$, this means that the $i$-th item should be given  to the $j_0$-th bidder. 
\end{remark}

We finish this subsection with an observation that under assumptions below the virtual valuation $c$ can be replaced by $\nabla u$.
Assume:
\begin{itemize}
\item {\bf(A1)} There exists an optimal vector field $c$ without singular components and having the form
    $c = \nabla \varphi (\nabla u)$, where $\varphi = \sum_{i=1}^m \varphi_i(t_i)$.
    \item 
{\bf(A2)} All $\varphi_i$ are strictly convex on 
    $[0,1]\setminus\{t: \varphi_i(t)=0\}$.
    \item 
{\bf(A3)} For every $1 \le i \le n$ distribution of $u_{x_i}$ has no atoms on $(0,1]$ (but necessary has an atom at the origin).
\end{itemize}

Note that assumptions (A2)-(A3) are made to exclude "ironing". We know  necessary and sufficient conditions for this for $n=1$. We don't know what are sufficient conditions for this if $n>1$. Note, however, that numerical simulations for $n=m=2$ and $\rho=1$ demonstrate empirical evidence that assumptions (A1)-(A3) are fulfilled in this case (see pictures from \cite{KSTZ}).

 Since $\varphi'_i$ is strictly increasing, then 
$$\max_{1 \le j \le m} c_{i,j}(x_j) = 
\max_{1 \le j \le m} \varphi_i'(u_{x_{i,j}}(x_j))
= \varphi_i'( \max_{1 \le j \le m} u_{x_{i,j}}(x_j))
$$
and  we obtain the following characterization of the optimal mechanism.

\begin{proposition} 
\label{mi2ndprice}
    Let $u, (\varphi_i)$ be a solution to the reduced auctioneer's problem and $c$ be a solution to the corresponding dual problem. Under assumptions (A1)-(A3)  the optimal mechanism $P$ has the following representation
    $$
P_{i,j}(x)  
= I_{ u_{x_{i,j}}(x_j) >  \max_{k \ne j} 
u_{x_{i,k}}(x_k)}
    $$
    for $\overline{\rho}$-almost all $x$.
\end{proposition}

If $n=1$, then the  mechanism described in Theorem \ref{mi2ndprice} is the so-called second-price auction.
Proposition \ref{mi2ndprice} provides a natural generalization of this result: under assumption (A1)-(A3)  the mechanism does not randomize 
and simply chooses the bidder with the largest preference (the largest $u_{x_{i,k}}$).

\begin{remark}\label{n1beck}{\bf  (Case $n=1$ revisited, second price auction)}

Let $n=1$ and assume that the model is regular (see Subsection \ref{cfs1}).
Note that the maximum of
$(u'(x_1), \cdots, u'(x_m))$ is reached at the point
$
\max_{1 \le j \le m} x_j
$
and
$$
P_{j}(x_1, \cdots,x_m) = I_{x_j > \max_{k \ne j} x_k}.
$$
In particular, $P$ has representation
$$
P = \nabla \Phi(x_1, \cdots, x_k),
$$
where
$
\Phi(x) = \max_{1 \le j \le m} (x_m - x_0), 
$
and $[0,x_0] = \{ x: u(x)=0\}$.

To construct $T$ it is sufficient to take any $T_i$ satisfying
$\overline{T}_i(x_i) = \langle \overline{P}_i(x_i), x_i \rangle - u(x_i)$. However, one can prove by direct computation that the { second price auction} given by $P_j$ and
$$
T_j(x) =  
\max_{k \ne j} \max(x_k,x_0)
\cdot I_{x_j > \max_{k \ne j} x_k} 
$$
is optimal and not only ex-ante, but ex-post.
\end{remark}


\section{Open problems and perspectives}

\subsection{Open problems}
\label{opp}
The duality  results for many bidders and items raise many questions. Let us briefly discuss some of them.

\begin{enumerate}

    \item {\bf Can the dual auctioneer's problem for many bidders be interpreted  as an optimal (weak) transportation problem?}

    We will discuss this question in the last section of the paper. Let us also note that any Beckmann's problem is always (\cite{Santambrogio}, \cite{BP}) 
    equivalent to an appropriate congested transport problem, which is a transportation problem on the infinite-dimensional space of curves (and it is a finite-dimensional transshipment problem (\ref{OTP}) only for the case when $\varphi_i$ are indicators of a segment, but this is not the case for $m>1$). Thus, converting the Beckmann's problem into the congested transport problem, we get automatically a kind of infinite dimensional optimal transport problem. But it is not clear, whether this interpretation has some applications.  

\item{\bf Does there always exist a  measure $\pi$ with finite variation giving minimum to the dual monopolist's problem in Theorem \ref{dual=beck}? }

It was mistakenly claimed in \cite{TBK} that such a measure always exists, but the proof contains a gap (the separating functional obtained as a limit can happen to be the zero functional). In fact, this does not seem to be true for the general monopolist problem. But for the auctioneer's problem this might be true. In particular, for $m=1$ bidder the majorizing measure $\pi$ has the same variation as $\mu$ by Theorem \ref{DDT1bid}.

The measure $\pi$ always exists at least for sufficiently regular $\varphi$ and $\rho$, since it has 
representation $\pi = - {\rm div}(\nabla \varphi(\nabla u) \cdot \rho)$ and this always makes sense, because $\varphi$ and $u$ are convex. However, the variation of $\pi$ can be a priori infinite.
In the other hand, we  remark that $\pi$ has always finite $W^1_{l^{\infty}}$-norm for any $\varphi$, which is finite on $X$ and infinite outside of $X$. Indeed, taking a $1$-Lipschits (with respect to $l^{\infty}$-norm) function $v$ one gets
$$
\int_X v d\pi = \int_X \langle \nabla v, c \rangle \rho dx
\le \int_X |c|_{l^1} \rho dx.
$$
But the right-hand side is finite, because $\int_X \varphi^*(c) \rho dx < \infty$. Thus 
we get that the supremum of left-hand side is finite and by duality $\|\pi\|_{W^1_{l^{\infty}}} < \infty$.
This, however, does not mean that $\pi$ has finite variation.
Note than under certain assumptions on $\varphi$, for instance
$\varphi \ge \ f(|\nabla \varphi|) $
 for some increasing $f \ge 0$, we can get a pointwise bound for $\nabla \varphi$ from Proposition 
\ref{ae-exist-lip}.

    \item {\bf Regularity questions}
    
Does there always exist a regular vector field $c$ giving minimum in (\ref{eq_vector_fields_min})? Regularity means that $\varsigma_i=0$. We do not know the answer.
Numerical experiments  from \cite{KSTZ}
demonstrates that $u$ does not look to be continuously differentiable, but $c$ and $\varphi_i$ seem to be smooth. 
It would be interesting to obtain any result on regularity of solutions in general case. 
\end{enumerate}

\subsection{Perspectives: transshipment problem}

It is a natural question whether multi-bidder auction has a relation to transshipment  problem (\ref{OTP}) as the $1$-bidder auction.
In this subsection we get an affirmative answer. However, the relation in this case is more complicated, because, as the reader will see, for $m>1$ it requires additional optimization of the cost function. 

\begin{lemma}
Let $F$ be a bounded vector field.
Set:
$$
d_F(x,y) = \sup_{v : |\nabla v + F|_{\overline{X}^\circ \le 1}} \bigl( v(x) - v(y) \bigr), \ x,y \in \overline{X}.
$$
Then for any differentiable function $v$ the following properties are equivalent:
\begin{enumerate}
\item 
$|\nabla v + F|_{\overline{X}^{\circ}} \le 1$
    \item 
For all $x,y$ one has:
\begin{equation}
    \label{lip2703}
v(x) - v(y) \le d_F(x,y).
\end{equation} 
Equivalently $
-d_F(y,x) \le v(x)-v(y) \le d_F(x,y).
     $
\end{enumerate}
\end{lemma}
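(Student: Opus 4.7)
The plan is to split into the two directions. The implication (1) $\Rightarrow$ (2) is immediate from the definition of $d_F$ as a supremum: any $v$ satisfying $|\nabla v + F|_{\overline{X}^\circ}\le 1$ is itself admissible in that supremum, so $v(x)-v(y)\le d_F(x,y)$; applying this with $x$ and $y$ swapped gives the symmetric lower bound $-d_F(y,x)\le v(x)-v(y)$.

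For the reverse implication (2) $\Rightarrow$ (1), the strategy is first to establish a curvilinear upper bound on $d_F$, and then to differentiate the inequality in (2) along straight lines at an arbitrary base point. For the upper bound, I would fix any admissible competitor $w$ in the definition of $d_F$ and any $C^1$ curve $\gamma:[0,1]\to\overline{X}$ joining $y$ to $x$. The fundamental theorem of calculus combined with the pointwise duality bound $\langle a,b\rangle\le |a|_{\overline{X}^\circ}\,|b|_{\overline{X}}$ gives
$$w(x)-w(y) = \int_0^1\langle(\nabla w+F)(\gamma),\gamma'\rangle\dd t-\int_0^1\langle F(\gamma),\gamma'\rangle\dd t\le\int_0^1|\gamma'|_{\overline{X}}\dd t-\int_0^1\langle F(\gamma),\gamma'\rangle\dd t.$$
Taking the supremum over admissible $w$ and the infimum over curves $\gamma$ then yields the Finsler-type bound $d_F(x,y)\le \inf_\gamma\bigl(\int_0^1|\gamma'|_{\overline{X}}\dd t-\int_0^1\langle F(\gamma),\gamma'\rangle\dd t\bigr)$.

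For the differentiation step, I would fix $x_0\in\overline{X}$ and a direction $\xi$ with $x_0+s\xi\in\overline{X}$ for small $s>0$, and test the upper bound on the straight-line path $\gamma_s(t)=x_0+ts\xi$, which gives
$$d_F(x_0+s\xi,x_0)\le s|\xi|_{\overline{X}}-s\int_0^1\langle F(x_0+ts\xi),\xi\rangle\dd t.$$
Continuity of $F$ turns the right-hand side into $s\bigl(|\xi|_{\overline{X}}-\langle F(x_0),\xi\rangle\bigr)+o(s)$. Combining this with the hypothesis $v(x_0+s\xi)-v(x_0)\le d_F(x_0+s\xi,x_0)$, using differentiability of $v$ at $x_0$, dividing by $s>0$ and letting $s\to 0^+$ produces $\langle\nabla v(x_0)+F(x_0),\xi\rangle\le|\xi|_{\overline{X}}$. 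Since $\xi$ is arbitrary (at an interior point), the very definition of the dual norm forces $|\nabla v(x_0)+F(x_0)|_{\overline{X}^\circ}\le 1$, which is (1).

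The main obstacle I anticipate is the boundary behaviour: for $x_0\in\partial\overline{X}$ the one-sided variation above only delivers the inequality along inward directions, and to conclude $|\nabla v+F|_{\overline{X}^\circ}\le 1$ at such points one must either invoke continuity of $\nabla v+F$ to transfer the bound from the interior, or verify that the cone of admissible $\xi$ already spans $\mathbb{R}^{nm}$ (which fails only at the corners of the product cube). Everything else is essentially routine: the duality inequality used to bound the line integral is by definition of $|\cdot|_{\overline{X}^\circ}$, and the straight-line estimate is sharp to first order in $s$.
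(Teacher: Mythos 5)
Your proof is correct and follows essentially the same strategy as the paper's: establish a local first-order upper bound on $d_F$ via the line-integral and dual-norm estimate, with the error term controlled uniformly by the continuity of $F$ rather than by the competitor $w$, and then test this bound against the directional derivative of $v$ at an arbitrary base point to recover the pointwise dual-norm inequality (the paper phrases this last step by contradiction, you argue directly, but the computation is the same). Your closing remark about boundary points is a genuine subtlety that the paper's proof also quietly ignores — at corners of $\overline{X}$ the cone of inward directions does not span $\mathbb{R}^{nm}$, so one really does need the continuity of $\nabla v + F$ (or interior approximation) to propagate the estimate to $\partial\overline{X}$ — and you are right to flag it.
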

\begin{proof}
   Clearly 1) $\Longrightarrow$ 2).
    Assume that  $v$ satisfies (\ref{lip2703}). 
Then any differentiable function $w$ with property 
    $|\nabla w(x) + F(x)|_{\overline{X}^{\circ}} \le 1$
 satisfies the obvious inequality:
  $$
|w(y) - w(x) + \langle F(x), y-x \rangle| \le 
|y-x|_{\overline{X}} + o(|y-x|)
    $$
    for all $y$ from a small neighborhood of $x$.
 In particular, we get
      \begin{equation}
    \label{vinc}
    d_F(y,x) + \langle F(x), y-x \rangle \le 
|y-x|_{\overline{X}} + o(|y-x|).
    \end{equation}
    
    Assume that at some point  $x_0$ one has $|\nabla v(x_0) + F(x_0)|_{\overline{X}^{\circ}} > 1$.
    This means that for a vector  $e$ with $|e|_{\overline{X}}=1$, one has:
    $\langle e, \nabla v(x_0) + F(x_0)\rangle >1$.
    In particular, for small values of $t$
    $$
    v(x_0+te) - v(x_0) + \langle F(x_0), te \rangle =
    \langle te, \nabla v(x_0) + F(x_0) \rangle + o(t) > t + o(t) = |te|_{\overline{X}} + o(t).
    $$
    Thus one has by 2)
    $$
    d_F(x_0+te,x_0) 
     + \langle F(x_0), te \rangle
     \ge 
    v(x_0+te) - v(x_0) + \langle F(x_0), te \rangle  > |te|_{\overline{X}} + o(t).
    $$
    Consequently, inequality (\ref{vinc}) is false for $x=x_0$, $y= x_0+te$ and sufficiently small $t$. We get a contradiction.
 \end{proof}

Using this lemma we relate the Beckmann's problem with constraints to a version of the transshipment problem (\ref{OTP}).
Let us remind the reader the  duality for (\ref{OTP}):
 $$
\sup_{v: v(x)-v(y) \le d_F(x,y)} \int_{\overline{X}} v d \overline{\pi}
= 
 \inf_{\gamma: {\rm Pr}_1 \gamma - {\rm Pr}_2 \gamma = \overline{\pi}} \int_{\overline{X} \times \overline{X}} d_F(x,y) d \gamma.
$$
\begin{remark}
    If $F$ is symmetric, the without loss of generality we can consider only symmetric functions $v$ in the left-hand side, because $d_F$ and $\overline{\pi}$ are symmetric.
\end{remark}

To avoid technicalities, let us consider the $L^2(\overline{\rho}dx)$-dense subspace 
$$
L_b \subset L
$$
of bounded symmetric vector fields from $L$.

Applying the above duality we get the following
\begin{align*}
& \sup_{v \in Lip(| \cdot|_{\overline{X}}, L)} \int_{\overline{X}} v d \overline{{\pi}} =
\sup_{v, F \in L: |v + \nabla F|_{\overline{X}^{\circ}} \le 1} \int_{\overline{X}} v d \overline{{\pi}} =  \sup_{v, F \in L_b: |v + \nabla F|_{\overline{X}^{\circ}} \le 1} \int_{\overline{X}} v d \overline{{\pi}} \\& = 
\sup_{F \in L_b} \Bigl( \sup_{v: v(x)-v(y) \le d_F(x,y)} \int_{\overline{X}} v d \overline{\pi}\Bigr)
= \sup_{F \in L_b} \Bigl(
 \inf_{\gamma: {\rm Pr}_1 \gamma - {\rm Pr}_2 \gamma = \overline{\pi}} \int_{\overline{X} \times \overline{X}} d_F(x,y) d \gamma \Bigr).
 \end{align*}
 We obtain that the dual Beckmann's problem with linear constraints is related to the transshipment problem in the following way
\begin{theorem} {\bf (Relation to the optimal transshipment problem)}
\label{trpr}
$$\inf_{\overline{c} \in L^{\bot}, - {\rm div}_{{\rho}} c = {\pi} } \int_{\overline{X}} |\overline{c}|_{\overline{X}} \overline{\rho} dx
= \sup_{F \in L_b} \Bigl(
 \inf_{\gamma: {\rm Pr}_1 \gamma - {\rm Pr}_2 \gamma = \overline{\pi}} \int_{\overline{X} \times \overline{X}} d_F(x,y) d \gamma \Bigr).
 $$
\end{theorem}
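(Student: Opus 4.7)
The plan is to combine Theorem \ref{dualbeckconst} with the preceding Lemma, reducing the claim to a classical Kantorovich duality for the optimal transshipment problem with cost $d_F$ applied fiberwise in $F \in L$. The overall structure is: dual Beckmann with linear constraints $\to$ pointwise Lipschitz condition with respect to $d_F$ $\to$ transshipment duality for each fixed $F$ $\to$ take the outer supremum over $F$.

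First I would invoke Theorem \ref{dualbeckconst} to rewrite the left-hand side as $\sup_{v \in Lip(|\cdot|_{\overline{X}}, L)} \int_{\overline{X}} v \, d\overline{\pi}$. By the definition of $Lip(|\cdot|_{\overline{X}}, L)$ this supremum ranges over differentiable $v$ for which there \emph{exists} some $F \in L$ with $|\nabla v + F|_{\overline{X}^\circ}\le 1$. I would externalise this existential quantifier as an outer supremum over $F \in L$ and apply the preceding Lemma to translate the condition $|\nabla v + F|_{\overline{X}^\circ}\le 1$ into the pointwise inequality $v(x)-v(y)\le d_F(x,y)$ for all $x,y\in\overline{X}$, yielding
\[
\sup_{v \in Lip(|\cdot|_{\overline{X}}, L)} \int_{\overline{X}} v \, d\overline{\pi}
= \sup_{F \in L}\ \sup_{v:\, v(x)-v(y) \le d_F(x,y)} \int_{\overline{X}} v \, d\overline{\pi}.
\]
The interchange of the two outer suprema is trivial. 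For each fixed $F\in L$ the inner supremum is the Kantorovich dual of a transshipment problem with cost $d_F$; appealing to the classical transshipment duality recalled in Section~3.2 (now for a general cost) gives
\[
\sup_{v:\, v(x)-v(y) \le d_F(x,y)} \int_{\overline{X}} v \, d\overline{\pi}
= \inf_{\gamma:\, \mathrm{Pr}_1\gamma - \mathrm{Pr}_2\gamma = \overline{\pi}} \int_{\overline{X}\times\overline{X}} d_F(x,y) \, d\gamma,
\]
and taking the supremum over $F\in L$ on both sides yields the stated identity.

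The main obstacle will be making the last step rigorous uniformly in $F$. The function $d_F$ is only a pseudometric: it is generally non-symmetric, may vanish on pairs of distinct points, and is a priori only lower semicontinuous with possibly infinite values when there is no admissible $v$ linking $x$ and $y$ with the prescribed boundary normalisation. To apply Kantorovich duality one needs at least measurability and lower semicontinuity of $d_F$, together with a non-empty set of dual competitors. My plan is to treat this by a Fenchel--Rockafellar argument analogous to the one used in the proof of Theorem \ref{DDT1bid}, approximating $F$ by smooth fields and the cost $d_F$ by a monotone sequence of bounded lower semicontinuous pseudometrics, and to relax the differentiability requirement on $v$ in $Lip(|\cdot|_{\overline{X}},L)$ by standard mollification --- a $d_F$-Lipschitz function that is merely continuous can be approximated so that the integral against $\overline{\pi}$ is preserved in the limit. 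A secondary subtlety is to ensure that when the inner infimum over $\gamma$ is $+\infty$ (i.e.\ the transshipment problem is infeasible), this is consistent with the primal left-hand side, which is handled by noting that infeasibility forces $\overline{\pi}(\overline{X})\neq 0$, a case already excluded by assumption.
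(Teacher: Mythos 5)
Your proposal reproduces the paper's argument exactly: start from Theorem \ref{dualbeckconst}, externalise the existential quantifier over $F\in L$, invoke the Lemma to rewrite the constraint $|\nabla v + F|_{\overline{X}^\circ}\le 1$ as $v(x)-v(y)\le d_F(x,y)$, and then apply transshipment duality for each fixed $F$. The paper itself presents this chain of equalities informally (it explicitly announces that rigorous proofs are omitted in this section), so the regularity caveats you raise about $d_F$ are consistent with the gaps the paper leaves open.
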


\begin{remark} 
{\bf (Auction design and optimal transshipment problem (\ref{OTP})).}
We obtain the following corollary for the auctioneer's problem.
    The value of the auction design problem equals
    $$
     \inf_{\pi \succeq \mu} \sup_{F \in L_b}  \Bigl(
 \inf_{\gamma: {\rm Pr}_1 \gamma - {\rm Pr}_2 \gamma = \overline{\pi}} \int_{\overline{X} \times \overline{X}} d_F(x,y) d \gamma \Bigr).
 $$
 Here
 $
d_F(x,y) = \sup_{v : |\nabla v + F|_{\overline{X}^\circ \le 1}} \bigl( v(x) - v(y) \bigr),
$
where the supremum is taken over symmetric differentiable functions $v$.

We hope to justify the following heuristic observations in the subsequent work.
\begin{itemize}
    \item
 Formally applying minimax principle we can rewrite the functional in a bit more elegant form
 $$
     \sup_{F \in L_b}  \Bigl(
 \inf_{\gamma: {\rm Pr}_1 \gamma - {\rm Pr}_2 \gamma = \overline{\pi}, \pi \succeq \mu} \int_{\overline{X} \times \overline{X}} d_F(x,y) d \gamma \Bigr).
 $$
\item
    Remarkably,  the case $n=1$ can be also reduced to the  weak transportation problem without additional optimization on $F$. Here below is the heuristic argument. According to Remark \ref{n1beck} the optimal mechanism has the gradient representation $P = \nabla \Phi$ in the regular case. This means that we can take $F=0$ in the optimal mechanism. Hence the value of the auctioneer's problem is
$$
\inf_{\gamma: {\rm Pr}_1 \gamma - {\rm Pr}_2 \gamma = \overline{\pi}, \pi \succeq \mu} \int_{\overline{X} \times \overline{X}} |x-y|_{\overline{X}} d \gamma.
$$
\end{itemize}
\end{remark}


\begin{thebibliography}{99}




\bibitem{ACJ}
Alfonsi A., Corbetta J., Jourdain B. Sampling of probability measures in the convex
order by Wasserstein projection. Ann. Inst. H. Poincar\'e. Probab. Stat. 2020.  56 (3),  1706--1729.

\bibitem{AP1}
  Ambrosio L.,  Pratelli A.
{ Existence and stability results in the $L^1$
theory of optimal transportation},
 Lecture Notes in Math., 2003, 1813, 123--160. 




\bibitem{BGN}
Babaioff M.,  Gonczarowski Y.A.,   Nisan N. The menu-size complexity of
revenue approximation. Games and Economic Behavior, 2021, 134, 281--307.

\bibitem{BILW}
Babaioff M., Immorlica N., Lucier B.,  Weinberg S. M. A simple and approximately optimal mechanism for an additive buyer. Journal of the ACM (JACM), 2020,
67(4), 1--40.



\bibitem{BVP}
 Backhoff-Veraguas~J.,   Pammer G. Applications of weak transport theory, Bernoulli, 2022, 28 (1), 370--394.


\bibitem{Beck}
Beckmann M. {A continuous model of transportation},
 Econometrica: Journal of the Econometric Society, 1952,
{643--660}.

\bibitem{TBK}
Bogachev~T.V.,  Kolesnikov~A.V. On the Monopolist Problem and Its Dual, Mathematical Notes, 2023, 114 (2),  147--158.

\bibitem{BoKo}
Bogachev V.I., Kolesnikov A.V.
{The {Monge}--{Kantorovich} problem: achievements, connections, and perspectives}, {Russian Mathematical Surveys}, 2012, 67 (5),  785--890.

\bibitem{BoKo2023}
Bogachev~V.I., Kolesnikov~A.V., Shaposhnikov~S.V., Problems of Monge and Kantorovich of optimal transportation, ICS, 2023 (in Russian).




\bibitem{BPR}
Bogachev V.I., Popova S.N., Rezbaev A.V. On nonlinear Kantorovich problems
with density constraints. Moscow Math.  2023,  23(3), 285--307.

\bibitem{Border}
Border K. C. Reduced Form Auctions Revisited, Economic Theory. 2007,
31, 167--181.

\bibitem{BP}
Brasco L., Petrache M. A Continuous Model of Transportation Revisited, J. Math. Sci., 2014, 196, 119--137.


\bibitem{CFMC}
 Caffarelli L.A.,  Feldman M.,  McCann R.J. {Constructing optimal
maps for Monge's transport problem as a limit of strictly convex
costs},  J.~Amer. Math. Soc., 2002, 15(1), 1--26.


\bibitem{CarLac}
Carlier G., Lachand-Robert T. 
 Regularity of solutions for some variational problems subject to a convexity constraint,  Commun. Pure Appl. Anal,
 2001,  54 (5),  583--594.



\bibitem{Cho}
Chon{\'e} P.,   Kramarz F. Matching Workers `Skills and Firms'  Technologies: From Bundling to Unbundling, Preprint, 2021.

\bibitem{Dask} Daskalakis C. Multi-item auctions defying intuition? ACM SIGecom Exchanges, 2015,
14(1), 41--75.



\bibitem{DDT}
{Daskalakis C.,  
Deckelbaum A., 
Tzamos C.} 
{Strong Duality for a Multiple-Good Monopolist},
  {Econometrica}, 2017,
  85(3),  735--767.


\bibitem{EGan}
 Evans L.C.,  Gangbo W.
Differential equations methods for the Monge--Kantorovich
mass transfer problem, Mem. Amer. Math. Soc., 1999,
137 (653),  viii+66 p.



\bibitem{fkmc}
{Figalli A.,  Kim Y.-H.,  McCann R.-J.}
{When is multidimensional screening a convex program?},
{Journal of Economic Theory}, 2011.
  146(2), 454--478.
  
  \bibitem{Gal}
Galichon A. Optimal transport methods in economics. Princeton University Press,
Princeton, 2016; xii+170 p.

\bibitem{GK}
Giannakopoulos~Y.,  Koutsoupias~E. Duality and optimality of auctions for
uniform distributions, SIAM Journal on Computing, 2018, 47(1), 121--165.

  
  \bibitem{Goz}
Gozlan N., Roberto C., Samson P.-M., Tetali P. Kantorovich duality for general transport costs and applications, J. Funct. Anal., 2017, 273(11),  3327--3405.

\bibitem{HR}
Hart S., Reny P. Implementation of reduced form mechanisms: a simple approach and a new characterization,
{Economic Theory Bulletin}, 2015, 3(1), 1--8.



\bibitem{HR2017}
Hart S.,  Nisan N. Approximate revenue maximization with multiple items,
Journal of Economic Theory, 2017, 172, 313--347.

\bibitem{HR2019}
 Hart S., Reny P.J. The better half of selling separately. ACM Transactions
on Economics and Computation (TEAC), 2019, 7(4), 1--18.

\bibitem{KM}
Kleiner A.,  Manelli A. 
Strong duality in monopoly pricing,  Econometrica, 2019,  87(4),  1391--1396. 

\bibitem{KSTZ}
Kolesnikov~A.V., Sandomirskiy~F., 
Tsyvinski~A., Zimin~A.P.
{ Beckmann's approach to multi-item multi-bidder
auctions},
https://arxiv.org/pdf/2203.06837.pdf


\bibitem{MV}
Manelli A.M.,  Vincent D.R.  Bundling as an optimal selling mechanism for
a multiple-good monopolist, Journal of Economic Theory, 2006, 127(1), 1--35.

\bibitem{Marton}
Marton K. A measure concentration inequality for contracting Markov chains,
Geom. Funct. Anal., 1996,  6(3), 556--571.

\bibitem{Matthews}
Matthews S. A. On the Implementability of Reduced Form Auctions, Econometrica, 1984, 52, 1519--1522.

\bibitem{McCannRankinZhang} 
McCann~R.J.,  Rankin~C.,   Zhang~K.S.
$C^{1,1}$-regularity for principal-agent problems, arXiv:2303.04937.


\bibitem{McCannZhang}  {McCann, R.,  Zhang, K.S.}
 {On Concavity of the Monopolist's Problem Facing Consumers with Nonlinear Price Preferences},
 {Communication on pure and applied mathematics},
  2019, 72(7), 1386--1423.

\bibitem{McCannZhang2} {McCann, R.,  Zhang, K.S.}
A duality and free boundary approach to adverse selection. arXiv:2301.07660.

\bibitem{MussaRosen}
Mussa M., Rosen S. Monopoly and Product Quality, Journal of Economic Theory,  1978,  18, 301--317. 

\bibitem{Myerson}
 Myerson R.B. Optimal auction design, Mathematics of operations research, 1981, 6(1),
58--73.



\bibitem{RC}
 Rochet J.-P.,  Chon{\'e}~P.
Ironing, Sweeping, and Multidimensional Screening,
Econometrica, 1998, 66(4), 783--826.


\bibitem{Santambrogio}
Santambrogio F.
  {Optimal transport for applied mathematicians},
  {Birk{\"a}user, NY}, 2015.

 \bibitem{Sud}
   Sudakov V.N. Geometric problems of the theory of infinite-dimensional probability distributions, Trudy Mat. Inst. Steklov, 1976, 141, 3--191; Proc. Steklov Inst. Math., 1979, 141, 1--178. 

\bibitem{Villani}
 Villani C.  Topics in optimal transportation,
Amer. Math. Soc. Providence, Rhode Island, 2003; xvi+370~p.




\end{thebibliography}
\end{document}